\renewcommand\AB@affilsepx{, \protect\Affilfont}
\algnewcommand{\LineComment}[1]{\Statex \hskip\ALG@thistlm {\color{gray}\textrm{// #1}}}
\algnewcommand{\SectionComment}[2]{\Statex {\color{#2}\(\triangleright\) \textrm{#1}}}
\algnewcommand{\InlineComment}[1]{{\hspace{0.5em}\color{gray}\textrm{// #1}}}
\algnewcommand{\Phase}[1]{\SectionComment{#1}{myorange}}
\renewcommand\paragraph{\@startsection{paragraph}{4}{\z@}%
{1.25ex \@plus1ex \@minus.2ex}%
{-1em}%
{\normalfont\normalsize\bfseries}}
\newcommand{\sharedbase}{./podc2019-final}
\newcommand{\graphbase}{./graphs}
\algnewcommand\algorithmicswitch{\textbf{switch}}
\algnewcommand\algorithmiccase{\textbf{case}}
\algnewcommand\algorithmicassert{\texttt{assert}}
\algnewcommand\Assert[1]{\State \algorithmicassert(#1)}%
\algnewcommand\algorithmiccontinue{\textbf{continue}}
\algnewcommand\algorithmicbreak{\textbf{break}}
\algnewcommand\Continue{\algorithmiccontinue}
\algnewcommand\Break{\algorithmicbreak}
\newtheorem{theorem}{Theorem}
\newtheorem{lemma}[theorem]{Lemma}
\theoremstyle{definition}
\newtheorem{notation}{Notation}
\algnewcommand{\IIf}[1]{\State\algorithmicif\ #1\ \algorithmicthen}
\algnewcommand{\EndIIf}{\unskip\ \algorithmicend\ \algorithmicif}
\DeclareMathOperator*{\argmax}{\arg\max} 
\DeclareMathOperator*{\argmin}{\arg\min} 
\definecolor{myblue}{HTML}{0072bd}
\definecolor{myorange}{HTML}{d95319}
\definecolor{mybrown}{HTML}{ad2c0a}
\newcommand{\acset}{V}
\newcommand{\acsetof}[1]{\acset[#1]}
\newcommand{\Lvl}{b}
\newcommand{\lvl}{w}
\newcommand{\Lvlx}{b}
\newcommand{\Lvlxx}{b'}
\newcommand{\Lvlxxx}{b''}
\newcommand{\lvly}{w}
\newcommand{\lvlyy}{w'}
\newcommand{\lvlyyy}{w''}
\newcommand{\Lqc}{\Lvl_\mathit{lock}}
\newcommand{\Lnew}{\Lvl_\mathit{new}}
\newcommand{\Ltail}{\Lvl_\mathit{leaf}}
\newcommand{\QCedge}{(\Leftarrow \land \gets)}
\newcommand{\qcedge}{\Leftarrow}
\newcommand{\pathfrom}{\stackrel{*}{\gets}}
\newcommand{\nopathfrom}{\stackrel{*}{\cancel{\gets}}}
\newcommand{\vheight}{\mathit{vheight}}
\newcommand{\Lexec}{\Lvl_\mathit{exec}}
\newcommand{\Cmd}{\mathit{cmd}}
   \newcommand{\remark}[1]{}
    \newcommand{\remark}[1]{#1}
\newcommand{\HotStuffBasic}{Basic \HotStuff}
\newcommand{\HotStuffPro}{Chained \HotStuff}
\newcommand{\Gadget}{Pacemaker\xspace}
\newcommand{\sigSign}[1]{\ensuremath{\mathit{sign}_{#1}}\xspace}
\newcommand{\sigVerify}[1]{\ensuremath{\mathit{verify}_{#1}}\xspace}
\newcommand{\tsigSign}[1]{\ensuremath{\mathit{tsign}_{#1}}\xspace}
\newcommand{\tsigCombine}{\ensuremath{\mathit{tcombine}}\xspace}
\newcommand{\tsigVerify}{\ensuremath{\mathit{tverify}}\xspace}
\newcommand{\sig}{\ensuremath{\sigma}\xspace}
\newcommand{\partialResult}[1]{\ensuremath{\rho}_{#1}\xspace}
\newcommand{\cryptoHashFn}{\ensuremath{h}\xspace}
\newcommand{\gViewNumber}{\mathit{curView}}
\newcommand{\gBranch}{\mathit{curProposal}}
\newcommand{\fViewNumber}[1]{{#1}\mathit{.viewNumber}}
\newcommand{\fBranch}[1]{{#1}\mathit{.node}}
\newcommand{\fType}[1]{{#1}\mathit{.type}}
\newcommand{\fCert}[1]{{#1}\mathit{.sig}}
\newcommand{\fHeight}[1]{{#1}\mathit{.height}}
\newcommand{\fJustify}[1]{{#1}\mathit{.justify}}
\newcommand{\fParent}[1]{{#1}\mathit{.parent}}
\newcommand{\fCmd}[1]{{#1}\mathit{.cmd}}
\newcommand{\msgViewNumber}[1]{\fViewNumber{#1}}
\newcommand{\msgType}[1]{\fType{#1}}
\newcommand{\msgBranch}[1]{\fBranch{#1}}
\newcommand{\msgQC}[1]{#1\mathit{.justify}}
\newcommand{\msgSig}[1]{#1\mathit{.partialSig}}
\newcommand{\msgQCViewNumber}[1]{\fViewNumber{\msgQC{#1}}}
\newcommand{\msgQCBranch}[1]{\fBranch{\msgQC{#1}}}
\newcommand{\cmd}{\mathit{cmd}}
\newcommand{\type}{\mathit{type}}
\newcommand{\branch}{\mathit{node}}
\newcommand{\parent}{\mathit{parent}}
\newcommand{\height}{\mathit{height}}
\newcommand{\viewNumber}{\mathit{viewNumber}}
\newcommand{\qcVar}[1]{\mathit{#1}\mathit{QC}}
\newcommand{\qc}{\mathit{qc}}
\newcommand{\Hqc}{\qc_{\mathit{high}}}
\newcommand{\hqc}{\qcVar{high}}
\newcommand{\lqc}{\qcVar{locked}}
\newcommand{\prepareQC}{\qcVar{prepare}}
\newcommand{\keyQC}{\qcVar{precommit}}
\newcommand{\commitQC}{\qcVar{commit}}
\newcommand{\genericQC}{\qcVar{generic}}
\newcommand{\PREPARE}{\textsc{prepare}\xspace}
\newcommand{\COMMIT}{\textsc{commit}\xspace}
\newcommand{\KEY}{\textsc{pre-commit}\xspace}
\newcommand{\DECIDE}{\textsc{decide}\xspace}
\newcommand{\NEWVIEW}{\textsc{new-view}\xspace}
\newcommand{\GENERIC}{\textsc{generic}\xspace}
\newcommand{\QC}{QC\xspace}
\newcommand{\QCfull}{quorum certificate\xspace}
\newcommand{\QCFull}{Quorum certificate\xspace}
\newcommand{\QCFULL}{Quorum Certificate\xspace}
\algnewcommand{\leader}{\Call{leader}{\gViewNumber}}
\algnewcommand{\nextLeader}{\Call{leader}{\gViewNumber+1}}
\algnewcommand{\leaderOf}[1]{\Call{leader}{#1}}
\algnewcommand\algorithmicas{\textbf{as}}
\newcommand{\HotStuff}{HotStuff\xspace}
\newcommand{\multiversions}[2]{#2}
\begin{document}
\title{\HotStuff: BFT Consensus in the Lens of Blockchain}
\author[1,2]{Maofan Yin}
\author[2]{Dahlia Malkhi}
\author[2,3]{Michael K. Reiter}
\author[2]{Guy Golan Gueta}
\author[2]{Ittai Abraham}
\affil[1]{Cornell University}
\affil[2]{VMware Research}
\affil[3]{UNC-Chapel Hill}
\date{}
\maketitle

\begin{abstract}
We present \HotStuff, a leader-based Byzantine fault-tolerant
replication protocol for the partially synchronous model.  Once
network communication becomes synchronous, \HotStuff enables a correct
leader to drive the protocol to consensus at the pace of actual (vs.\
maximum) network delay---a property
called \textit{responsiveness}---and with communication complexity
that is linear in the number of replicas.  To our
knowledge, \HotStuff\ is the first partially synchronous BFT
replication protocol exhibiting these combined properties.
\multiversions{%
Its simplicity enables it to be further pipelined and simplified into a
practical, concise protocol for building large-scale replication services.
}{%
\HotStuff is built around a novel framework
that forms a bridge between classical BFT foundations and
blockchains. It allows the expression of other known protocols
(DLS, PBFT, Tendermint, Casper), and ours, in a
common framework.

Our deployment of \HotStuff over a
network with over 100 replicas achieves throughput and latency
comparable to that of BFT-SMaRt, while enjoying linear communication
footprint during leader failover (vs.\ cubic with BFT-SMaRt).}

\end{abstract}

\section{Introduction}
\label{sec:intro}

Byzantine fault tolerance (BFT) refers to the ability of a computing
system to endure arbitrary (i.e., Byzantine) failures of its
components while taking actions critical to the system's operation.
In the context of state machine replication
(SMR)~\cite{Lamport78,SMR}, the system as a whole provides a
replicated service whose state is mirrored across $n$ deterministic
replicas.  A BFT SMR protocol is used to ensure that non-faulty
replicas agree on an order of execution for client-initiated service
commands, despite the efforts of $f$ Byzantine replicas.  This, in
turn, ensures that the $n - f$ non-faulty replicas will run commands
identically and so produce the same response for each command.  As is
common, we are concerned here with the partially synchronous
communication model~\cite{DLS88}, whereby a known bound $\Delta$
on message transmission holds after some unknown \textit{global
  stabilization time} (GST).  In this model, $n \ge 3f+1$ is required
for non-faulty replicas to agree on the same commands in the same
order (e.g.,~\cite{Ben-Or83}) and progress can be
ensured deterministically only after GST~\cite{FLP85}.

When BFT SMR protocols were originally conceived, a typical target
system size was $n = 4$ or $n = 7$, deployed on a local-area network.
However, the renewed interest in Byzantine fault-tolerance brought
about by its application to blockchains now demands solutions that can
scale to much larger $n$. 
In contrast to
\textit{permissionless} blockchains such as the one that supports
Bitcoin, for example, so-called \textit{permissioned} blockchains
involve a fixed set of replicas that collectively maintain an ordered
ledger of commands or, in other words, that support SMR.  Despite
their permissioned nature, numbers of replicas in the hundreds or even
thousands are envisioned (e.g.,~\cite{Thunder,SBFT}).
Additionally, their deployment to wide-area networks
requires setting $\Delta$ to accommodate higher variability in 
communication delays. 

\paragraph{The scaling challenge.}
Since the introduction of PBFT~\cite{CL99}, the first practical BFT replication solution in
the partial synchrony model,
numerous BFT solutions were built around its core two-phase paradigm. 
The practical aspect is that a stable leader can drive a consensus decision in just two rounds of message exchanges.
The first phase guarantees proposal uniqueness through the formation of a
\QCfull (\QC) consisting of $(n-f)$ votes. 
The second phase guarantees that the next leader can convince replicas to vote for a safe proposal. 

The algorithm for a new leader to collect information and propose it to replicas---called a \emph{view-change}---is the epicenter of
replication.
Unfortunately, view-change based on the two-phase paradigm is far from simple~\cite{SaddestMoment}, 
is bug-prone~\cite{RevisitBFT},
and incurs a significant communication penalty for even moderate system sizes.  
It requires the new leader to relay information from $(n-f)$ replicas, each reporting
its own highest known \QC. 
Even counting just authenticators (digital signatures or message authentication
codes), conveying a new proposal has a communication
footprint of $O(n^3)$ authenticators in PBFT, and variants that
combine multiple authenticators into one via threshold digital
signatures (e.g.,~\cite{CKS05,SBFT}) still send $O(n^2)$
authenticators.  The total number of authenticators transmitted if
$O(n)$ view-changes occur before a single consensus decision is
reached is $O(n^4)$ in PBFT, and even with threshold signatures is
$O(n^3)$. This scaling challenge plagues not only PBFT, but many other
protocols developed since then, e.g., 
Prime~\cite{Prime11},
Zyzzyva~\cite{KAD09},
Upright~\cite{Upright},
BFT-SMaRt~\cite{BFTSMaRt}, 
700BFT~\cite{SevenBFT},
% Hybrid Consensus~\cite{PS16}, Thunderella~\cite{Thunder}, 
and SBFT~\cite{SBFT}.
%\DM{removed Thunderella and HC, to my knowledge, they are hybrid
%permissionless/permissioned models. Added some more classical
%partially-synchronous works.}

\HotStuff revolves around a three-phase core, allowing a new leader to simply
pick the highest \QC it knows of. It introduces a second phase that allows
replicas to ``change their mind'' after voting in the phase, without
requiring a leader proof at all.
This alleviates the above complexity, and at the same time considerably simplifies the leader replacement protocol.
Last, having (almost) canonized all the phases, it is very easy to pipeline \HotStuff, and
to frequently rotate leaders. 

To our knowledge, only BFT protocols in the blockchain arena like
Tendermint~\cite{TendermintThesis,TendermintGossip} and
Casper~\cite{Casper} follow such a simple leader regime.  However, these
systems are built around a synchronous core, wherein proposals are
made in pre-determined intervals that must accommodate the worst-case
time it takes to propagate messages over a wide-area peer-to-peer
gossip network.  In doing so, they forego a hallmark of most practical
BFT SMR solutions (including those listed above), namely
\textit{optimistic responsiveness}~\cite{Thunder}.  Informally,
\textit{responsiveness} requires that a non-faulty leader, once
designated, can drive the protocol to consensus in time depending only
on the \textit{actual} message delays, independent of any known upper
bound on message transmission delays~\cite{ADLS94}.  More
appropriate for our model is \textit{optimistic responsiveness}, which
requires responsiveness only in beneficial (and hopefully common)
circumstances---here, after GST is reached.  Optimistic or not,
responsiveness is precluded with designs such as Tendermint/Casper. The crux of
the difficulty is that there may exist an honest replica that has the highest
\QC, but the leader does not know about it. One can build scenarios where this
prevents progress ad infinitum (see Section~\ref{hs:correct} for a detailed
liveless scenario).
Indeed, failing to incorporate necessary delays at crucial protocol
steps can result in losing liveness outright, as has been reported in
several existing deployments, e.g., see~\cite{TMLivenessBug,IBFTLivenessBug,CV17}.

%Additionally, many of these systems do not have precise algorithm specification. 
\paragraph{Our contributions.}
To our knowledge, we present the first BFT SMR protocol, called
\HotStuff, to achieve the following two properties:
\begin{itemize}
\item \textbf{Linear View Change}: After GST, any correct leader, once
  designated, sends only $O(n)$ authenticators to drive a consensus
  decision.  This includes the case where a leader is replaced.
  Consequently, communication costs to reach consensus after GST is
  $O(n^2)$ authenticators in the worst case of cascading leader
  failures.

\item \textbf{Optimistic Responsiveness}: After GST, any correct
  leader, once designated, needs to wait just for the first $n - f$
  responses to guarantee that it can create a proposal that will make
  progress.  This includes the case where a leader is replaced.
\end{itemize}

Another feature of \HotStuff is that the costs for a new leader to
drive the protocol to consensus is no greater than that for the
current leader.  As such, \HotStuff supports frequent succession of
leaders, which has been argued is useful in blockchain contexts for
ensuring chain quality~\cite{GKL15}.

\HotStuff achieves these properties by adding another phase to each view, 
a small price to latency in return for considerably simplifying the leader
replacement protocol.
This exchange incurs only the actual network
delays, which are typically far
smaller than $\Delta$ in practice.
As such, we expect this added
latency to be much smaller than that incurred by previous protocols
that forgo responsiveness to achieve linear view-change. 
Furthermore, throughput is not affected due to the efficient pipeline we introduce in Section~\ref{sec:hspro}.

\multiversions{%
\HotStuff has the additional benefit of being remarkably simple. Safety is
specified via voting and commit rules over graphs of nodes.
The mechanisms needed to achieve liveness are
encapsulated within a \emph{\Gadget}, cleanly separated from the
mechanisms needed for safety (Section~\ref{sec:implementation}).
%\vspace{-0.5em}
}{%
In addition to the theoretical contribution, \HotStuff also provides insights
in understanding BFT replication in general and instantiating the protocol in
practice (see Section~\ref{sec:implementation}):
\begin{itemize}
\item A framework for BFT replication over graphs of nodes. Safety is
  specified via voting and commit graph rules. Liveness is specified
  separately via a \Gadget that extends the graph with new nodes.

\item A casting of several known protocols (DLS, PBFT, Tendermint, and Casper)
  and one new (ours, \HotStuff), in this framework.
\end{itemize}
\HotStuff has the additional benefit of being remarkably simple,
owing in part to its economy of mechanism: There are only two message
types and simple rules to determine how a
replica treats each.  Safety is
specified via voting and commit rules over graphs of nodes.
The mechanisms needed to achieve liveness are
encapsulated within a \emph{\Gadget}, cleanly separated from the
mechanisms needed for safety.  At the same time, it is expressive in
that it allows the representation of several known protocols (DLS, PBFT, Tendermint, and Casper) as minor variations.  In part this flexibility
derives from its operation over a graph of nodes, in a way that forms
a bridge between classical BFT foundations and modern blockchains.

We describe a prototype implementation and a preliminary evaluation of
\HotStuff.  Deployed over a network with over a hundred
replicas, \HotStuff achieves throughput and latency comparable to, and
sometimes exceeding, those of
mature systems such as BFT-SMaRt, whose code complexity far exceeds
that of \HotStuff.
We further demonstrate that the communication footprint of \HotStuff remains
constant in face of frequent leader replacements, whereas BFT-SMaRt
grows quadratically with the number of replicas.
}

\begin{table*}[h]
  \begin{center}
  \multiversions{\footnotesize}{\small}
  \begin{tabular}{lcccc}
    \toprule
     \multirow{2}{*}{Protocol} & \multicolumn{3}{c}{Authenticator complexity} & \multirow{2}{*}{Responsiveness} \\
     & \textit{Correct leader} & \textit{Leader failure (view-change)} & \textit{$f$ leader failures} & \\
    \midrule
    DLS~\cite{DLS88}
    & $O(n^4)$ & $O(n^4)$ & $O(n^4)$ &  \\
    PBFT~\cite{CL99}
    & $O(n^2)$ & $O(n^3)$ & $O(fn^3)$ & \checkmark \\
    SBFT~\cite{SBFT}
    & $O(n)$ & $O(n^2)$ & $O(fn^2)$ & \checkmark \\
    Tendermint~\cite{TendermintThesis} / Casper~\cite{Casper}
    & $O(n^2)$ & $O(n^2)$ & $O(fn^2)$ \\
    Tendermint\textsuperscript{*} / Casper\textsuperscript{*}
    & $O(n)$ & $O(n)$ & $O(fn)$ \\
    \textbf{\HotStuff}
      & $\bm{O(n)}$ & $\bm{O(n)}$ & $\bm{O(fn)}$ & \Checkmark \\
    \bottomrule
  \end{tabular}

  \textsuperscript{*}\footnotesize{Signatures can be combined using threshold signatures, though this optimization is not mentioned in their original works.}
  \end{center}
  \caption{Performance of selected protocols after GST.}
  \label{tbl:comparison}
  \vspace{-2em}
\end{table*}

\section{Related work}
\label{sec:related}

Reaching consensus in face of Byzantine failures was formulated as the
Byzantine Generals Problem by Lamport et al.~\cite{LSP82}, who also
coined the term ``Byzantine failures''. The first synchronous
solution was given by Pease et al.~\cite{PSL80}, and later improved by
Dolev and Strong~\cite{DS82}. The improved
protocol has $O(n^3)$ communication complexity, which was shown
optimal by Dolev and Reischuk~\cite{DR85}. A
leader-based synchronous protocol that uses randomness
was given by Katz and
Koo~\cite{KK09}, showing an expected constant-round
solution with $(n-1)/2$ resilience.

Meanwhile, in the asynchronous settings, Fischer et al.~\cite{FLP85}
showed that the problem is unsolvable deterministically in
asynchronous setting in face of a single failure.  Furthermore, an
$(n-1)/3$ resilience bound for any asynchronous solution was proven by
Ben-Or~\cite{Ben-Or83}.  Two approaches were
devised to circumvent the impossibility. One relies on randomness,
initially shown by Ben-Or~\cite{Ben-Or83}, using
independently random coin flips by processes until they happen to
converge to consensus. Later works used cryptographic methods to share
an unpredictable coin and drive complexities down to constant expected
rounds, and $O(n^3)$ communication~\cite{CKS05}.

The second approach relies on partial synchrony, first shown by Dwork,
Lynch, and Stockmeyer (DLS)~\cite{DLS88}.  This protocol
preserves safety during asynchronous periods, and after the system
becomes synchronous, DLS guarantees termination. Once synchrony is
maintained, DLS incurs $O(n^4)$ total communication and $O(n)$ rounds per
decision.

State machine replication relies on
consensus at its core to order client requests so that correct
replicas execute them in this order.  The recurring need for consensus
in SMR led Lamport to devise Paxos~\cite{Paxos}, a protocol that
operates an efficient pipeline in which a stable leader drives
decisions with linear communication and one round-trip. A similar
emphasis led Castro and Liskov~\cite{CL99,CL02} to develop an efficient
leader-based Byzantine SMR protocol named PBFT, whose stable leader
requires $O(n^2)$ communication and two round-trips per decision, and
the leader replacement protocol incurs $O(n^3)$ communication. PBFT
has been deployed in several systems, including
BFT-SMaRt~\cite{BFTSMaRt}. Kotla et al.\ introduced an optimistic
linear path into PBFT in a protocol named Zyzzyva~\cite{KAD09}, which
was utilized in several systems, e.g., Upright~\cite{Upright} and
Byzcoin~\cite{KJ16}.  The optimistic path has linear complexity, while
the leader replacement protocol remains $O(n^3)$.  Abraham et
al.~\cite{RevisitBFT} later exposed a safety violation in Zyzzyva,
and presented fixes~\cite{Zelma18,SBFT}. On the other hand,
to also reduce the complexity of the protocol itself, Song et al.\ proposed
Bosco~\cite{Bosco}, a simple one-step protocol with low latency on the optimistic path, requiring $5f+1$ replicas.
SBFT~\cite{SBFT} introduces an $O(n^2)$ communication view-change protocol that
supports a stable leader protocol with optimistically linear, one round-trip
decisions. It reduces the communication complexity %complexities throughput by factor $n$
by harnessing two methods: a collector-based communication paradigm by
Reiter~\cite{R95}, and signature combining via threshold cryptography on
protocol votes by Cachin et al.~\cite{CKS05}.
%SBFT~\cite{SBFT} further reduces the communication complexities throughput by factor $n$ by harnessing two
%methods: A collector-based communication paradigm by
%Reiter~\cite{R95}, and signature combining via threshold cryptography
%on protocol votes by Cachin et al.~\cite{CKS05}.

A leader-based Byzantine SMR protocol that employs randomization was
presented by Ramasamy et al.~\cite{RC05},
and a leaderless variant named HoneyBadgerBFT was developed by Miller
et al.~\cite{HoneyBadger}. At their core, these randomized Byzantine
solutions employ randomized asynchronous Byzantine consensus, whose
best known communication complexity was $O(n^3)$ (see above),
amortizing the cost via batching.
However, most recently, based on the idea in this \HotStuff paper, a parallel submission to
PODC'19~\cite{AMS19} further improves the communication complexity to $O(n^2)$.

% Since a preliminary ArXiv publication of the HotStuff
% approach~\cite{HotStuff}, several
% solutions were built around its core 3-phase leader protocol with a linear leader transition. 
% These include a parallel submission to PODC 2019~\cite{AMS19} that presents the first randomized BFT Consensus protocol against a \emph{rushing
% adaptive adversary} whose communication complexity is $O(n^2)$.
% Another work called PaLa~\cite{Pala18} adapts HotStuff to settings with periods of leader
% stability, as well as builds reconfiguration into it.

Bitcoin's core is a protocol known as Nakamoto Consensus~\cite{Bitcoin}, a
synchronous protocol with only probabilistic safety guarantee and
no finality (see analysis
in~\cite{GKL15,PSS17,AM17}). It operates in a
\emph{permissionless} model where participants are unknown, and
resilience is kept via Proof-of-Work.  As described above, recent
blockchain solutions hybridize Proof-of-Work solutions with classical
BFT solutions in various
ways~\cite{BitcoinNG,KJ16,Solida,Casper,Algorand,Dfinity,Thunder}.  The
need to address rotating leaders in these hybrid solutions and others
provide the motivation behind \HotStuff.
%%%% This is already mentioned
%We adopt the notion of Optimistic Responsiveness from one of these, Thunderella~\cite{Thunder}. \TY{Do we, in this PODC submisstion?}

% ---*---
\section{Model} \label{sec:model}

We consider a system consisting of a fixed set of $n=3f+1$
\textit{replicas}, indexed by $i \in [n]$ where $[n] = \{1, \ldots,
n\}$.  A set $F \subset [n]$ of up to $f = |F|$ replicas are Byzantine
faulty, and the remaining ones are correct.  We will often refer to
the Byzantine replicas as being coordinated by an \textit{adversary},
which learns all internal state held by these replicas (including
their cryptographic keys, see below).

Network communication is point-to-point, authenticated and reliable: one correct
replica receives a message from another correct replica if and only if the
latter sent that message to the former. When we refer to a ``broadcast'', it involves
the broadcaster, if correct, sending the same point-to-point messages to all replicas, including itself.
We adopt the \textit{partial synchrony model} of Dwork et
al.~\cite{DLS88}, where there is a known bound $\Delta$ and an
unknown Global Stabilization Time (GST), such that after GST, all
transmissions between two correct replicas arrive within time
$\Delta$.  Our protocol will ensure safety always, and will guarantee
progress within a bounded duration after GST.  (Guaranteeing progress
before GST is impossible~\cite{FLP85}.)  In practice, our protocol will
guarantee progress if the system remains stable (i.e., if messages
arrive within $\Delta$ time) for sufficiently long after GST, though
assuming that it does so forever simplifies discussion.

\paragraph{Cryptographic primitives.}
\iffalse
We make use of standard digital signatures.  In a digital signature
scheme, each replica $i$ possesses a distinct private key, for which
each other replica knows the corresponding public key for replica $i$.
Replica $i$ can generate a signature $\sig \gets \sigSign{i}(m)$ on
message $m$.  Any other replica can verify the signature using the
public key for replica $i$, via the function $\sigVerify{i}$.  More
specifically, if $\sig \gets \sigSign{i}(m)$ then $\sigVerify{i}(m,
\sig)$ returns true, and the signature scheme is secure against chosen
message attacks (e.g.,~\cite{goldwasser1988:signature,
  bellare1996:exact}): given oracle access to $\sigSign{i}(\cdot)$ for
any $i \in [n]\setminus F$, an adversary who never queries
$\sigSign{i}(m)$ has negligible probability of producing a signature
$\sig$ for the message $m$ (i.e., such that $\sigVerify{i}(m, \sig)$
returns true).  For simplicity, we elide signature verifications
(invocations of \sigVerify{i}) in our protocol descriptions, instead
stipulating that upon a message/signature pair first being presented
to a replica, the replica verifies it and discards it if invalid.
\fi

\HotStuff makes use of threshold signatures~\cite{S00,CKS05,BLS04}.
In a $(k, n)$-threshold signature scheme, there is a single public key
held by all replicas, and each of the $n$ replicas holds a distinct
private key.  The $i$-th replica can use its private key to contribute
a \textit{partial signature} $\partialResult{i} \gets \tsigSign{i}(m)$
on message $m$.  Partial signatures $\{\partialResult{i}\}_{i \in I}$,
where $|I| = k$ and each $\partialResult{i} \gets \tsigSign{i}(m)$,
can be used to produce a digital signature $\sig \gets \tsigCombine(m,
\{\partialResult{i}\}_{i \in I})$ on $m$.  Any other replica can
verify the signature using the public key and the function
\tsigVerify.  We require that if $\partialResult{i} \gets
\tsigSign{i}(m)$ for each $i \in I$, $|I| = k$, and if $\sig \gets
\tsigCombine(m, \{\partialResult{i}\}_{i \in I})$, then
$\tsigVerify(m, \sig)$ returns true.  However, given oracle access to
oracles $\{\tsigSign{i}(\cdot)\}_{i \in [n]\setminus F}$, an adversary
who queries $\tsigSign{i}(m)$ on strictly fewer than $k - f$ of these
oracles has negligible probability of producing a signature $\sig$ for
the message $m$ (i.e., such that $\tsigVerify(m, \sig)$ returns true).
Throughout this paper, we use a threshold of $k = 2f+1$.  Again, we
will typically leave invocations of \tsigVerify implicit in our
protocol descriptions.

We also require a cryptographic hash function \cryptoHashFn (also
called a \textit{message digest} function), which maps an
arbitrary-length input to a fixed-length output.  The hash function
must be \textit{collision resistant}~\cite{RS04}, which
informally requires that the probability of an adversary producing
inputs $m$ and $m'$ such that $\cryptoHashFn(m) = \cryptoHashFn(m')$
is negligible.  As such, $\cryptoHashFn(m)$ can serve as an identifier
for a unique input $m$ in the protocol.

\paragraph{Complexity measure.}
The complexity measure we care about is \textit{authenticator
  complexity}, which specifically is the sum, over all replicas $i \in
[n]$, of the number of authenticators received by replica $i$ in the
protocol to reach a consensus decision after GST.  (Again, before GST,
a consensus decision might not be reached at all in the worst
case~\cite{FLP85}.)  Here, an \textit{authenticator} is either a
partial signature or a signature.  Authenticator complexity is a
useful measure of communication complexity for several reasons.
First, like bit complexity and unlike message complexity, it hides
unnecessary details about the transmission topology.  For example, $n$
messages carrying one authenticator count the same as one message
carrying $n$ authenticators.  Second, authenticator complexity is
better suited than bit complexity for capturing costs in protocols
like ours that reach consensus repeatedly, where each consensus
decision (or each view proposed on the way to that consensus decision)
is identified by a monotonically increasing counter.  That is, because
such a counter increases indefinitely, the bit complexity of a
protocol that sends such a counter cannot be bounded.  Third, since in
practice, cryptographic operations to produce or verify digital
signatures and to produce or combine partial signatures are typically
the most computationally intensive operations in protocols that use
them, the authenticator complexity provides insight into the
computational burden of the protocol, as well.

\section{\HotStuffBasic}

HotStuff solves the State Machine Replication (SMR) problem.
At the core of SMR is a protocol for deciding on a growing log of
\emph{command} requests by clients. A group of state-machine replicas apply commands in
sequence order consistently.
A client sends a command request to all replicas, and waits for
responses from $(f+1)$ of them. For the most part, we omit the client from
the discussion, and defer to the standard literature for issues regarding
numbering and de-duplication of client requests.

The \HotStuffBasic solution is presented in Algorithm~\ref{alg:hotstuff-basic}. 
The protocol works in a succession of \emph{views}
numbered with monotonically increasing view numbers. Each
$\viewNumber$ has a unique dedicated leader known to all.
Each replica stores a \emph{tree} of pending commands as its local data structure. Each tree \emph{node} contains a proposed
command (or a batch of them), metadata associated with the protocol,
and a \emph{parent} link. The \emph{branch} led by a given node is the path from the node all the way to the tree root by visiting parent links.
During the protocol, a monotonically growing branch becomes \emph{committed}. To become committed, the
leader of a particular view proposing the branch must collect
votes from a quorum of $(n-f)$ replicas in three phases, \PREPARE, \KEY,
and \COMMIT.

A key ingredient in the protocol is
a collection of $(n-f)$ votes over a leader proposal, referred to as a \emph{\QCfull} (or ``\QC'' in short).
The \QC is associated with a particular node and a view number.
The $\tsigCombine$ utility employs a threshold signature scheme to generate a
representation of $(n-f)$ signed votes as a single authenticator.

Below we give an operational description of the protocol logic by phases,
followed by a precise specification in Algorithm~\ref{alg:hotstuff-basic},
and conclude the section with safety, liveness, and complexity arguments.
\subsection{Phases}
\paragraph{\PREPARE phase.}

The protocol for a new leader starts by collecting \NEWVIEW messages from $(n-f)$
replicas. 
The \NEWVIEW message is sent by a replica as it transitions into $\viewNumber$
(including the first view)
and carries the highest $\prepareQC$ that the replica received ($\bot$ if none),
as described below.

The leader processes these messages in order to select a branch that
has the highest preceding view in which a $\prepareQC$ was formed.
The leader selects the $\prepareQC$ with the highest view, denoted $\hqc$,
among the \NEWVIEW messages.
Because $\hqc$ is the highest among $(n-f)$ replicas,
no higher view could have reached a commit decision.
The branch led by $\fBranch{\hqc}$ is therefore safe.

Collecting \NEWVIEW messages to select a safe branch may be omitted by an
incumbent leader, who may simply select its own highest $\prepareQC$ as $\hqc$.
We defer this optimization to Section~\ref{sec:implementation} and only
describe a single, unified leader protocol in this section.
Note that, different from PBFT-like protocols, including
this step in the leader protocol is straightforward, and it incurs the same,
linear overhead as all the other phases of the protocol, regardless of the situation.
% See section~\ref{sec:practical-system} for the discussion on such optimization.
%It is straightforward to optimize and avoid this work in the case of an incumbent leader.

The leader uses the \Call{createLeaf}{} method to extend the tail of $\fBranch{\hqc}$ with
a new proposal. The method creates a new leaf node as a child and 
embeds a digest of the parent in the child node. 
%\MKR{I expected to see \cryptoHashFn in the p-code, say in the \Call{createLeaf}{} procedure or in the argument to it?}
%\DM{there is a comment about it in data-structs}
The leader then sends the new node in a \PREPARE message
to all other replicas.
The proposal carries $\hqc$ for safety justification.

Upon receiving the \PREPARE message for the current view from the
leader, replica $r$ uses the \Call{safeNode}{} predicate to determine
whether to accept it.  If it is accepted, the replica sends a
\PREPARE vote with a partial signature (produced by \tsigSign{r})
for the proposal to the leader. 
% \MKR{We should be careful to
%   standardize what we call the output from \tsigSign{r}.  In
%   model.tex, I called it a ``partial signature'', which is why I
%   changed it here.  Another option would be to say that \tsigSign{r}
%   outputs a ``signature'', as it read before, but then I feel like we
%   need to specify that each such ``signature'' also needs to be
%   individually verifiable, with the replica's individual public
%   key. (This is elided from model.tex for now.)}
% \DM{done, right?}

\paragraph{\Call{safeNode}{} predicate.}

The \Call{safeNode}{} predicate is a core ingredient of the protocol. It examines
a proposal message $m$ carrying a \QC justification $\msgQC{m}$,
and determines whether $\msgBranch{m}$ is safe to accept.
The safety rule to accept a proposal is
the branch of $\msgBranch{m}$ extends from the currently locked node $\fBranch{\lqc}$.
On the other hand, the liveness rule is the replica will accept $m$ if $\msgQC{m}$ has a higher view than the current $\lqc$. The predicate is true as long as either one of two rules holds.

\paragraph{\KEY phase.}

When the leader receives $(n-f)$ \PREPARE votes for the
current proposal $\gBranch$,
it combines them into a $\prepareQC$.
The leader broadcasts $\prepareQC$ in \KEY messages.
%Similar to \PREPARE phase,
%if the follower replica's $\Call{safeNode}{}$ is true, it
A replica responds to the leader with \KEY vote
having a signed digest of the proposal.

\paragraph{\COMMIT phase.}

The \COMMIT phase is similar to \KEY phase.
When the leader receives $(n-f)$ \KEY votes,
%for the current proposal $\gBranch$,
it combines them into a $\keyQC$ and broadcasts it
in \COMMIT messages; replicas respond to it with a \COMMIT vote.
Importantly, a replica becomes \emph{locked} on the $\keyQC$ at
this point by setting its $\lqc$ to $\keyQC$
(Line \ref{line:hotstuff-basic:lock} of Algorithm~\ref{alg:hotstuff-basic}). This is crucial to guard the safety of the proposal in case it
becomes a consensus decision.

\paragraph{\DECIDE phase.}

When the leader receives $(n-f)$ \COMMIT votes,
%for the current proposal $\gBranch$,
it combines them into a $\commitQC$.
Once the leader has assembled a $\commitQC$, it sends it in a \DECIDE message to all other replicas.
Upon receiving a \DECIDE message,
a replica considers the proposal embodied in the $\commitQC$ a committed
decision, and executes the commands in the committed branch.
The replica increments $\viewNumber$ and starts the next view.

\paragraph{\Call{nextView}{} interrupt.}

In all phases, a replica waits for a message
at view $\viewNumber$ for a timeout
period, determined by an auxiliary $\textsc{nextView}(\viewNumber)$ utility.
If $\textsc{nextView}(\viewNumber)$ interrupts waiting,
the replica also increments $\viewNumber$ and starts the next view.

% \MKR{I still think that the above section should go together with
%  the p-code, and that the discussion should be augmented with
%  references to line numbers where what is being described is
%  performed in the p-code.  At present, the p-code seems to have
%  little to no discussion associated with it.}
% \DM{I added a few of code-refs to the above to link it to it. I also moved to
% the complexity analysis with the proofs. Otherwise, I think it is good to get a
% full high-level description first, no?}

\subsection{Data Structures}
\paragraph{Messages.}
A message $m$ in the protocol has a fixed set of fields that are
populated using the $\Call{Msg}{{}}$ utility shown in
Algorithm~\ref{alg:util}.  $m$ is automatically stamped with
$\gViewNumber$, the sender's current view number.  Each message
has a type
$\msgType{m} \in \{ \NEWVIEW, \PREPARE, \allowbreak\KEY, \COMMIT, \DECIDE\}$.
$\msgBranch{m}$ contains a proposed node (the leaf node of a
proposed branch).  There is an optional field $\msgQC{m}$. The leader always
uses this field to carry the \QC for the different phases. Replicas
use it in \NEWVIEW messages to carry the highest $\prepareQC$. 
Each message sent in a replica role contains a partial signature
$\msgSig{m}$ by the sender over the tuple $\langle \msgType{m}$,
$\msgViewNumber{m}$, $\msgBranch{m} \rangle$, which is added in the
$\Call{voteMsg}{{}}$ utility.

\paragraph{\QCFull{}s.}

A \QCFULL (\QC) over a tuple $\langle \type, \viewNumber, \branch \rangle$ is a
data type that combines a collection of signatures for the same tuple
signed by $(n-f)$ replicas.
Given a \QC $\qc$, we use $\fType{\qc}$, $\fViewNumber{\qc}$, $\fBranch{\qc}$ to refer to the matching fields of the original tuple.
%Two \QC's are compared first by their view-numbers, and then by their branch-tails. % Ted: why do we need this? We don't do something like QC1 < QC2$

\paragraph{Tree and branches.}

%Each replica maintains locally a growing tree of proposed commands.
Each command is wrapped in a node that additionally contains a parent link which could be a hash digest of the parent node. 
We omit the implementation details from the pseudocode.
During the protocol, a replica delivers a message only
after the branch led by the node is already in its local tree. In
practice, a recipient who falls behind can catch up by
fetching missing nodes from other replicas. 
For brevity, these details are also omitted from the pseudocode.
Two branches are \emph{conflicting} if neither one is an extension of the other.
Two nodes are conflicting if the branches led by them are conflicting.

\paragraph{Bookkeeping variables.}
A replica uses additional local variables for bookkeeping the protocol state:
(i) a $\viewNumber$, initially $1$ and incremented either by finishing a decision
or by a $\Call{nextView}{}$ interrupt;
(ii) a locked \QCfull $\lqc$, initially $\bot$, storing the highest \QC for which a replica voted \COMMIT; and
(iii) a $\prepareQC$, initially $\bot$, storing the highest \QC
for which a replica voted \KEY.
Additionally, in order to incrementally execute a committed log of commands, the replica
maintains the highest node whose branch has been executed. This is omitted below for brevity.

\subsection{Protocol Specification}
The protocol given in Algorithm~\ref{alg:hotstuff-basic} is described as an iterated
view-by-view loop.
In each view, 
a replica performs phases in succession based on its role, 
described as a succession of 
``\textbf{as}'' blocks. 
A replica can have more than one role. For example, a leader is also a (normal)
replica. Execution of \textbf{as} blocks across roles can be proceeded
concurrently. The execution of each \textbf{as} block is atomic.
A \Call{nextView}{} interrupt aborts all operations in any \textbf{as} block,
and jumps to the ``Finally'' block.

\begingroup
\captionof{algorithm}{Utilities (for replica $r$).}\label{alg:util}
\multiversions{\setstretch{0.93}}{}
\small\begin{algorithmic}[1]
\Function{Msg}{$\type$, $\branch$, $\qc$}
    \State $\msgType{m} \gets \type$
    \State $\msgViewNumber{m} \gets \gViewNumber$
    \State $\msgBranch{m} \gets \branch$
    \State $\msgQC{m}\gets \qc$
    \State \Return $m$
\EndFunction

\Function{voteMsg}{$\type$, $\branch$, $\qc$}
    \State $m \gets \Call{Msg}{\type, \branch, \qc}$
    \State $\msgSig{m} \gets \tsigSign{r}(\langle\msgType{m}, \msgViewNumber{m}, \msgBranch{m}\rangle)$
    \State \Return $m$
\EndFunction

\Procedure{createLeaf}{$\parent, \cmd$}
    \State $\fParent{b} \gets \parent$
    \State $\fCmd{b} \gets \cmd$
    \State \Return $b$
\EndProcedure

\Function{QC}{$V$}
    \State $\fType{\qc} \gets \msgType{m}: m \in V$
    \State $\fViewNumber{\qc} \gets \msgViewNumber{m} : m \in V$
    \State $\fBranch{\qc} \gets \msgBranch{m} : m \in V$
\multiversions{%
    \State $\fCert{\qc} \gets \tsigCombine(\langle\fType{\qc}, \fViewNumber{\qc}, \fBranch{\qc}\rangle,$
    \Statex\hspace{10em}$\{\msgSig{m} \mid m \in V\})$
}{%
    \State $\fCert{\qc} \gets \tsigCombine(\langle\fType{\qc}, \fViewNumber{\qc}, \fBranch{\qc}\rangle, \{\msgSig{m} \mid m \in V\})$
}
    \State \Return $\qc$
\EndFunction

\Function{matchingMsg}{$m, t, v$}
    \State\Return $(\msgType{m} = t) \land (\msgViewNumber{m} = v)$
\EndFunction

\Function{matchingQC}{$\qc, t, v$}
    \State\Return $(\fType{\qc} = t) \land (\fViewNumber{\qc} = v)$
\EndFunction

\Function{safeNode}{$\branch$, $\qc$}
    \State \Return
{\color{myblue}
    $(\branch\textrm{ extends from }\fBranch{\lqc})\,\,\lor$ \InlineComment{safety rule} \label{line:safeNode:conflict}}\\
    \hspace{1em}
{\color{myblue}
    $(\fViewNumber{\qc} > \fViewNumber{\lqc})$ \InlineComment{liveness rule} \label{line:safeNode:viewNumber}
}
        
\EndFunction
\end{algorithmic}
\endgroup
\begingroup
\captionof{algorithm}{\HotStuffBasic protocol (for replica $r$).}\label{alg:hotstuff-basic}
\multiversions{\setstretch{0.93}}{}
\small\begin{algorithmic}[1]
\For{$\gViewNumber \gets 1, 2, 3, \ldots$}

\Phase{\PREPARE phase}
\As{a leader} \InlineComment{$r = \leader$}
    \LineComment{we assume special \NEWVIEW messages from view $0$}
\multiversions{%
    \State wait for $(n-f)$ \NEWVIEW messages:
    \Statex\hspace{5em}$M \gets \{ m \mid \Call{matchingMsg}{m, \NEWVIEW, \gViewNumber-1 } \}$
}{%
    \State wait for $(n-f)$ \NEWVIEW messages: $M \gets \{ m \mid \Call{matchingMsg}{m, \NEWVIEW, \gViewNumber-1 } \}$
}
    \State $\hqc \gets \displaystyle\msgQC{\left(\argmax_{m \in M}\{\msgQCViewNumber{m}\}\right)}$
\multiversions{%
    \State $\gBranch \gets \textsc{createLeaf}(\fBranch{\hqc}, $
    \Statex\hspace{15.5em}$\textrm{client's command})$
}{%
    \State $\gBranch \gets \textsc{createLeaf}(\fBranch{\hqc}, \textrm{client's command})$
}
    \State broadcast \Call{Msg}{$\PREPARE, \gBranch, \hqc$}
\EndAs
\As{a replica}
\multiversions{%
    \State wait for message $m$ from $\leader$
    \Statex\hspace{5em}$m: \Call{matchingMsg}{m, \PREPARE, \gViewNumber}$
}{%
    \State wait for message $m: \Call{matchingMsg}{m, \PREPARE, \gViewNumber}$ from $\leader$
}
    \If{$\msgBranch{m}\textrm{ extends from }\fBranch{\msgQC{m}}\land{}$\Statex
        \hspace{8em}$\Call{safeNode}{\msgBranch{m}, \msgQC{m}}$}
        \State send $\Call{voteMsg}{\PREPARE, \msgBranch{m}, \bot}$ to $\leader$ \label{line:hotstuff-basic:vote-prepare}
    \EndIf
\EndAs

\Phase{\KEY phase}
\As{a leader} %\InlineComment{$r = \leader$}
\multiversions{%
    \State wait for $(n-f)$ votes:
    \Statex\hspace{5em}$V \gets \{ v \mid \Call{matchingMsg}{v, \PREPARE, \gViewNumber} \}$
}{%
    \State wait for $(n-f)$ votes: $V \gets \{ v \mid \Call{matchingMsg}{v, \PREPARE, \gViewNumber} \}$
}
    \State $\prepareQC \gets \Call{QC}{V}$
   \State broadcast $\Call{Msg}{\KEY, \bot, \prepareQC}$
\EndAs
\As{a replica}
\multiversions{%
    \State wait for message $m$ from $\leader$
    \Statex\hspace{5em}$m: \Call{matchingQC}{\msgQC{m}, \PREPARE, \gViewNumber}$
    \State {\color{myblue} $\prepareQC \gets \msgQC{m}$}\label{line:hotstuff-basic:prepare}
    \State send to $\leader$
    \Statex\hspace{5em}$\Call{voteMsg}{\KEY, \msgQCBranch{m}, \bot}$
}{%
    \State wait for message $m: \Call{matchingQC}{\msgQC{m}, \PREPARE, \gViewNumber}$ from $\leader$
    \State {\color{myblue} $\prepareQC \gets \msgQC{m}$}\label{line:hotstuff-basic:prepare}
    \State send $\Call{voteMsg}{\KEY, \msgQCBranch{m}, \bot}$ to $\leader$
}
\EndAs

\Phase{\COMMIT phase }
\As{a leader} %\InlineComment{$r = \leader$}
\multiversions{%
   \State wait for $(n-f)$ votes:
   \Statex\hspace{5em}$V \gets \{ v \mid \Call{matchingMsg}{v, \KEY, \gViewNumber} \}$
}{%
   \State wait for $(n-f)$ votes: $V \gets \{ v \mid \Call{matchingMsg}{v, \KEY, \gViewNumber} \}$
}
   \State $\keyQC \gets \Call{QC}{V}$
   \State broadcast $\Call{Msg}{\COMMIT, \bot, \keyQC}$
\EndAs
\As{a replica}
\multiversions{%
    \State wait for message $m$ from $\leader$
    \Statex\hspace{5em}$m: \Call{matchingQC}{\msgQC{m}, \KEY, \gViewNumber}$
    \State {\color{myblue} $\lqc \gets \msgQC{m}$}\label{line:hotstuff-basic:lock}
    \State send to $\leader$
    \Statex\hspace{5em}$\Call{voteMsg}{\COMMIT, \msgQCBranch{m}, \bot}$
}{%
    \State wait for message $m: \Call{matchingQC}{\msgQC{m}, \KEY, \gViewNumber}$ from $\leader$
    \State {\color{myblue} $\lqc \gets \msgQC{m}$}\label{line:hotstuff-basic:lock}
    \State send $\Call{voteMsg}{\COMMIT, \msgQCBranch{m}, \bot}$ to $\leader$
}
\EndAs

\Phase{\DECIDE phase }
\As{a leader} %\InlineComment{$r = \leader$}
\multiversions{%
    \State wait for $(n-f)$ votes:
    \Statex\hspace{5em}$V \gets \{ v \mid \Call{matchingMsg}{v, \COMMIT, \gViewNumber} \}$
}{%
    \State wait for $(n-f)$ votes: $V \gets \{ v \mid \Call{matchingMsg}{v, \COMMIT, \gViewNumber} \}$
}
    \State $\commitQC \gets \Call{QC}{V}$
    \State broadcast $\Call{Msg}{\DECIDE, \bot, \commitQC}$
\EndAs
\As{a replica}
    \State wait for message $m$ from $\leader$
\multiversions{%
    \Statex\hspace{5em}$m: \Call{matchingQC}{\msgQC{m}, \COMMIT, \gViewNumber}$
    \State execute new commands through $\fBranch{\msgQC{m}}$,
    \Statex\hspace{3em}respond to clients
}{%
    \State wait for message $m: \Call{matchingQC}{\msgQC{m}, \COMMIT, \gViewNumber}$ from $\leader$
    \State execute new commands through $\fBranch{\msgQC{m}}$, respond to clients
}
\EndAs

\Phase{Finally}
\multiversions{%
\State \Call{nextView}{} interrupt: goto this line if $\Call{nextView}{\gViewNumber}$ is
\Statex\hspace{1.5em}called during ``wait for'' in any phase
}{%
\State \Call{nextView}{} interrupt: goto this line if
$\Call{nextView}{\gViewNumber}$ is called during ``wait for'' in any phase
}
\State {\color{myblue} send $\Call{Msg}{\NEWVIEW, \bot, \prepareQC}$ to
$\nextLeader$}
\label{line:hotstuff-basic:newview}

\EndFor
\end{algorithmic}
\endgroup

\subsection{Safety, Liveness, and Complexity}\label{hs:correct}
\paragraph{Safety.}
We first define a quorum certificate $\qc$ to be \textit{valid} if
$\tsigVerify(\langle\fType{\qc}$, $\fViewNumber{\qc}$,
$\fBranch{\qc}\rangle$, $\fCert{\qc})$ is true.

\begin{lemma}\label{lm:basic}
    For any valid $\qc_1, \qc_2$ in which $\fType{\qc_1} = \fType{\qc_2}$ and $\fBranch{\qc_1}$ conflicts with $\fBranch{\qc_2}$, we have $\fViewNumber{\qc_1} \neq \fViewNumber{\qc_2}$.
\end{lemma}
\begin{proof}
To show a contradiction, suppose $\fViewNumber{\qc_1}$ $=$
$\fViewNumber{\qc_2}$ $=$ $v$.  Because a valid QC can be formed only with
$n - f = 2f + 1$ votes (i.e., partial signatures) for it, there must
be a correct replica who voted twice in the same phase of $v$.  This
is impossible because the pseudocode allows voting only once for each
phase in each view.
\end{proof}

\begin{theorem}
  If $w$ and $b$ are conflicting nodes, then they cannot be both
  committed, each by a correct replica.
\end{theorem}

%\MKR{Is there any reasoning behind the choice of notation ``$w$'' and
%  ``$b$''?  Seems sort of random; we might want to give some help to
%  the reader with better notation?}

\begin{proof}
We prove this important theorem by contradiction.  Let $\qc_1$ denote a
valid $\commitQC$ (i.e., $\fType{\qc_1} = \COMMIT$) such that
$\fBranch{\qc_1} = w$, and $\qc_2$ denote a valid $\commitQC$ such that
$\fBranch{\qc_2} = b$.  Denote $v_1 = \fViewNumber{\qc_1}$ and $v_2 =
\fViewNumber{\qc_2}$.  By Lemma~\ref{lm:basic}, $v_1 \neq
v_2$. W.l.o.g.\ assume $v_1 < v_2$.

We will now denote by $v_s$ the lowest view higher than $v_1$ for
which there is a valid $\prepareQC{}$, $\qc_s$ (i.e., $\fType{\qc_s} =
\PREPARE$) where $\fViewNumber{\qc_s} = v_s$, and $\fBranch{\qc_s}$
conflicts with $w$.  Formally, we define the following predicate for
any $\prepareQC$:
\begin{align*}
\multiversions{%
    E(\prepareQC) \coloneqq &(v_1 < \fViewNumber{\prepareQC} \le v_2) \\
          &\land~(\fBranch{\prepareQC}\textrm{ conflicts with }w).
}{%
    E(\prepareQC) \coloneqq &(v_1 < \fViewNumber{\prepareQC} \le v_2) \land (\fBranch{\prepareQC}\textrm{ conflicts with }w).
}
\end{align*}
We can now set the \emph{first} switching point $\qc_s$:
\[
\multiversions{%
    \qc_s \coloneqq \argmin_{\prepareQC}
    \left\{
    \begin{array}{c}
        \fViewNumber{\prepareQC} \mid \\
        \prepareQC\textrm{ is valid} \land E(\prepareQC)
    \end{array}
    \right\}.
}{%
    \qc_s \coloneqq \argmin_{\prepareQC}
    \left\{
        \fViewNumber{\prepareQC} \mid \prepareQC\textrm{ is valid} \land E(\prepareQC)
    \right\}.
}
\]
Note that, by assumption such a $\qc_s$ must exist; for example, $\qc_s$ could be the
$\prepareQC{}$ formed in view $v_2$.

Of the correct replicas that sent a partial result
$\tsigSign{r}(\langle\fType{\qc_1}$, $\fViewNumber{\qc_1}$,
$\fBranch{\qc_1}\rangle)$, let $r$ be the first that contributed
$\tsigSign{r}(\langle\fType{\qc_s}$, $\fViewNumber{\qc_s}$,
$\fBranch{\qc_s}\rangle)$; such an $r$ must exist since otherwise, one of
$\fCert{\qc_1}$ and $\fCert{\qc_s}$ could not have been created.  During
view $v_1$, replica $r$ updates its lock $\lqc$ to a $\keyQC$ on $w$
at Line~\ref{line:hotstuff-basic:lock} of Algorithm~\ref{alg:hotstuff-basic}.
Due to the minimality of $v_s$, the lock that replica $r$ has on the branch led by $w$
is not changed before $\qc_s$ is formed. Otherwise $r$ must have seen some
other $\prepareQC$ with lower view because Line~\ref{line:hotstuff-basic:prepare} comes before Line~\ref{line:hotstuff-basic:lock}, contradicting to the minimality.  Now consider the invocation of
$\Call{safeNode}{}$ in the \PREPARE{} phase of view $v_s$ by replica
$r$, with a message $m$ carrying $\msgBranch{m} = \fBranch{\qc_s}$.  By
assumption, $\msgBranch{m}$ conflicts with $\fBranch{\lqc}$, and so
the disjunct at Line~\ref{line:safeNode:conflict} of
Algorithm~\ref{alg:util} is false.  Moreover, $\fViewNumber{\msgQC{m}}
> v_1$ would violate the minimality of $v_s$, and so the disjunct in
Line~\ref{line:safeNode:viewNumber} of Algorithm~\ref{alg:util} is
also false.
%\MKR{Line~\ref{line:safeNode:viewNumber} is labeled ``liveness
%  rule,'' but we do have to make this point in the safety argument,
%  right?}
Thus, $\Call{safeNode}{}$ must return false and $r$ cannot
cast a \PREPARE{} vote on the conflicting branch in view $v_s$, a
contradiction.
\end{proof}

\paragraph{Liveness.}
There are two functions left undefined in the previous section:
$\Call{leader}{}$ and $\Call{nextView}{}$.  Their definition
will \emph{not} affect safety of the protocol, though they do matter
to liveness.  Before giving candidate definitions for them, we first
show that after GST, there is a bounded duration $T_f$ such that if
all correct replicas remain in view $v$ during $T_f$ and the leader
for view $v$ is correct, then a decision is reached.  Below, we say
that $\qc_1$ and $\qc_2$ \textit{match} if $\qc_1$ and $\qc_2$ are
valid, $\fBranch{\qc_1} = \fBranch{\qc_2}$, and $\fViewNumber{\qc_1}
= \fViewNumber{\qc_2}$.

\iffalse
%% MKR: I find this bit both redundant and, at the same time, confusing.
%% Somehow the v-1 is really confusing, for example.
It suffices to guarantee that a view $v$ exists, such that (i) a
correct replica that calls $\Call{leader}{v-1}$ returns a unique
leader, and (ii) the leader and $(n-f)$ correct replicas enter and
remain in view $v$ for duration $T_f$.  In an execution in which these
guarantees repeatedly hold, there will always be a new decision.
\fi

\begin{lemma}\label{lm:liveness}
If a correct replica is locked such that $\lqc$ $=$ $\keyQC$, then at least $f
+ 1$ correct replicas voted for some $\prepareQC$ matching $\lqc$.

\end{lemma}
\begin{proof}
Suppose replica $r$ is locked on $\keyQC$.  Then, $(n-f)$ votes were
cast for the matching $\prepareQC$ in the $\PREPARE{}$ phase
(Line~\ref{line:hotstuff-basic:vote-prepare} of
Algorithm~\ref{alg:hotstuff-basic}), out of which at least $f + 1$ were
from correct replicas.
\end{proof}

\begin{theorem}
After GST, there exists a bounded time period $T_f$ such that if all
correct replicas remain in view $v$ during $T_f$ and the leader for
view $v$ is correct, then a decision is reached.
\end{theorem}
\begin{proof}
    Starting in a new view, the leader collects $(n-f)$ \NEWVIEW{}
    messages and calculates its $\hqc$ before broadcasting a \PREPARE{}
    messsage.  Suppose among all replicas (including the leader
    itself), the highest kept lock is $\lqc = \keyQC^{*}$.  By
    Lemma~\ref{lm:liveness}, we know there are at least $f + 1$
    correct replicas that voted for a $\prepareQC^{*}$ matching
    $\keyQC^{*}$, and have already sent them to the leader in
    their \NEWVIEW{} messages. Thus, the leader must learn a matching
    $\prepareQC^{*}$ in at least one of these \NEWVIEW{} messages and
    use it as $\hqc$ in its \PREPARE{} message. By the assumption, all
    correct replicas are synchronized in their view and the leader is
    non-faulty.  Therefore, all correct replicas will vote in
    the \PREPARE{} phase, since in \Call{safeNode}{}, the condition on
    Line~\ref{line:safeNode:viewNumber} of Algorithm~\ref{alg:util} is
    satisfied (even if the $\branch$ in the message conflicts with a
    replica's stale $\fBranch{\lqc}$, and so
    Line~\ref{line:safeNode:conflict} is not).
    Then, after the leader
    assembles a valid $\prepareQC$ for this view, all replicas will
    vote in all the following phases, leading to a new decision.
    After GST, the duration $T_f$ for these phases to complete is of
    bounded length.

    The protocol is Optimistically Responsive because there is no explicit
    ``wait-for-$\Delta$'' step, and the logical disjunction in \textsc{safeNode}
    is used to override a stale lock with the help of the three-phase paradigm.
\end{proof}

We now provide simple constructions for $\Call{leader}{}$ and
$\Call{nextView}{}$ that suffice to ensure that after GST, eventually
a view will be reached in which the leader is correct and all correct
replicas remain in this view for $T_f$ time.  It suffices for
$\Call{leader}{}$ to return some deterministic mapping from view
number to a replica, eventually rotating through all replicas.  A
possible solution for $\Call{nextView}{}$ is to utilize an exponential
back-off mechanism that maintains a timeout interval. Then a timer is
set upon entering each view. When the timer goes off without making
any decision, the replica doubles the interval and calls
$\Call{nextView}{}$ to advance the view.  Since the interval is
doubled at each time, the waiting intervals of all correct replicas
will eventually have at least $T_f$ overlap in common, during which
the leader could drive a decision.

\paragraph{Livelessness with two-phases.}
We now briefly demonstrate an infinite non-deciding scenario for a
``two-phase'' HotStuff. This explains the necessity for introducing a
synchronous delay in Casper and Tendermint, and hence for abandoning
(Optimistic) Responsiveness.

In the two-phase HotStuff variant, we omit the \KEY phase and proceed
directly to \COMMIT. A replica becomes locked when it votes on a $\prepareQC$.
%The problematic scenario is as follows.
% We assume 2f+1 honest replicas and the rest are crashed.
Suppose, in view $v$, a leader proposes $b$. It completes the \PREPARE phase, and
some replica $r_v$ votes for the $\prepareQC$, say $\qc$, such that $\fBranch{\qc} = b$. Hence,
$r_v$ becomes locked on $\qc$. An asynchronous network scheduling causes the rest
of the replicas to move to view $v+1$ without receiving $\qc$.

We now repeat ad infinitum the following single-view transcript. We start view
$v+1$ with only $r_v$ holding the highest $\prepareQC$ (i.e. $\qc$) in
the system. The new leader $l$ collects new-view messages
from $2f+1$ replicas excluding $r_v$. The highest $\prepareQC$ among these, $\qc'$, has
view $v-1$ and $b' = \fBranch{\qc'}$ conflicts with $b$. $l$ then proposes $b''$
which extends $b'$, to which $2f$ honest replicas respond with a vote, but $r_v$ rejects it
because it is locked on $\qc$, $b''$ conflicts with $b$ and $\qc'$ is lower than $\qc$. Eventaully,
$2f$ replicas give up and move to the next view. Just then, a faulty replica
responds to $l$'s proposal, $l$ then puts together a $\prepareQC(v + 1, b'')$
and one replica, say $r_{v+1}$ votes for it and becomes locked on it.

\paragraph{Complexity.}
In each phase of \HotStuff, only the leader broadcasts to all replicas
while the replicas respond to the sender once with a partial
signature to certify the vote. In the leader's message, the \QC{}
consists of a proof of $(n-f)$ votes collected previously, which can
be encoded by a single threshold signature. In a replica's response,
the partial signature from that replica is the only authenticator.
Therefore, in each phase, there are $O(n)$ authenticators received in
total.  As there is a constant number of phases, the overall complexity
per view is $O(n)$.

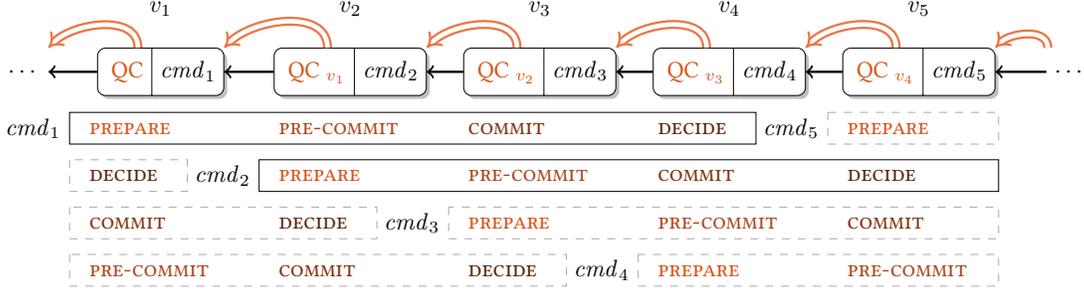
\begin{figure*}[htb]
\centering
\begin{tikzpicture}[x=1.12cm, scale=0.9, every node/.append style={transform shape}]
    \tikzstyle{data}=[rectangle split,rectangle split horizontal,rectangle split parts=2,rectangle split part align=base,draw,text centered, drop shadow={shadow xshift=0.3ex,shadow yshift=-0.3ex}, rounded corners=0.7ex]
    \definecolor{darkorange}{HTML}{d95319}
    \definecolor{orange}{HTML}{eb7645}
    \definecolor{mygray}{HTML}{aaaaaa}
    \begin{scope}[all/.style={draw, minimum height=0.7cm, minimum width=0.7cm},line width=0.08ex]
        %\node[all, draw,fill=white] (b0) at (0, 0) {};
        \node[all, draw=none,fill=none] (b0) at (0.2, 0) {$\cdots$};
        \node[all, draw=none,fill=none] (b6) at (14, 0) {$\cdots$};
        \foreach \x in {1,...,5}
        {
            \node[all, data, draw,fill=white] (b\x) at ({2 + (\x - 1) * 2.5}, 0) {
                \pgfmathtruncatemacro{\xx}{\x - 1}
                {\color{darkorange}\ifthenelse{\x = 1}{\QC}{\QC\textsubscript{$v_\xx$}}}\nodepart{two}$\cmd_\x$
            };
            \node[anchor=south, yshift={1em}] (b\x{}l) at (b\x.north) {$v_\x$};
        }
        \begin{scope}[line width=0.2ex]
        \foreach \x in {1,...,6}
        {
            \pgfmathtruncatemacro{\xx}{\x - 1}
            \path[<-] (b\xx) edge[out=0, in=180] node[sloped,above] {} (b\x);
        }
        \end{scope}
        \foreach \x in {1,...,6}
        {
            \pgfmathtruncatemacro{\xx}{\x - 1}
            \begin{scope}[color=orange,line width=0.2ex]
                \path[-implies] (b\x.north) ++ (-0.3, 0) edge[double distance=0.4ex, out=90, in=north east] node[sloped,above] {} (b\xx);
            \end{scope}
        }
        \newcommand{\phases}{{"prepare", "pre-commit", "commit", "decide"}}
        \newcommand{\phasecolors}{{"d95319", "ac471d", "823210", "5b240c"}}
        \foreach \x in {1,...,5}
        {
            \foreach \y in {0,...,3}
            {
                \pgfmathsetmacro{\cphase}{\phases[mod(4 + (\x - 1) - \y, 4)]}
                \pgfmathsetmacro{\colorcode}{\phasecolors[mod(4 + (\x - 1) - \y, 4)]}
                \pgfmathsetmacro{\yy}{-(0.5 + \y * 0.7)}
                \definecolor{currentcolor}{HTML}{\colorcode}
                \node[anchor=north, align=left, text width={6em}] (p\x\y) at ({2 + (\x - 1) * 2.5}, \yy) {\strut\textsc{\color{currentcolor}{\cphase}}};
            }
        }
        \draw[]($(p10.north west)+(-0.15,-0.1)$) rectangle ($(p40.south east)+(-0.7,0.1)$);
        \draw[]($(p21.north west)+(-0.15,-0.1)$) rectangle ($(p51.south east)+(0,0.1)$);
        \begin{scope}[dashed, color=mygray]
            \draw[]($(p50.north west)+(-0.15,-0.1)$) rectangle ($(p50.south east)+(0,0.1)$);
            \draw[]($(p11.north west)+(-0.15,-0.1)$) rectangle ($(p11.south east)+(-0.7,0.1)$);
            \draw[]($(p12.north west)+(-0.15,-0.1)$) rectangle ($(p22.south east)+(-0.7,0.1)$);
            \draw[]($(p32.north west)+(-0.15,-0.1)$) rectangle ($(p52.south east)+(0,0.1)$);
            \draw[]($(p13.north west)+(-0.15,-0.1)$) rectangle ($(p33.south east)+(-0.7,0.1)$);
            \draw[]($(p43.north west)+(-0.15,-0.1)$) rectangle ($(p53.south east)+(0,0.1)$);
        \end{scope}
        \node[anchor=east, xshift={-0.5em}] (cmd1) at (p10.west) {$\cmd_1$};
        \node[anchor=east, xshift={-0.5em}] (cmd2) at (p21.west) {$\cmd_2$};
        \node[anchor=east, xshift={-0.5em}] (cmd3) at (p32.west) {$\cmd_3$};
        \node[anchor=east, xshift={-0.5em}] (cmd4) at (p43.west) {$\cmd_4$};
        \node[anchor=east, xshift={-0.5em}] (cmd5) at (p50.west) {$\cmd_5$};
    \end{scope}
\end{tikzpicture}
\caption{\HotStuffPro{} is a pipelined \HotStuffBasic{} where a QC
  can serve in different phases simultaneously.}
\label{fig:pipelining}
\end{figure*}
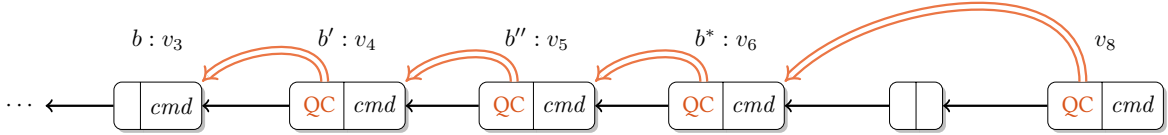
\begin{figure*}[htb]
\centering
\vspace{-2em}
\begin{tikzpicture}[x=1.12cm, scale=0.9, every node/.append style={transform shape}]
    \tikzstyle{data}=[rectangle split,rectangle split horizontal,rectangle split parts=2,rectangle split part align=base,draw,text centered, drop shadow={shadow xshift=0.3ex,shadow yshift=-0.3ex}, rounded corners=0.7ex]
    \definecolor{darkorange}{HTML}{d95319}
    \definecolor{orange}{HTML}{eb7645}
    \definecolor{mygray}{HTML}{aaaaaa}
    \begin{scope}[all/.style={draw, minimum height=0.7cm, minimum width=0.7cm},line width=0.08ex]
        %\node[all, draw,fill=white] (b0) at (0, 0) {};
        \node[all, draw=none,fill=none] (b0) at (0.2, 0) {$\cdots$};
        \node[all, draw=none,fill=none] (b6) at (14, 0) {$\cdots$};
        \foreach \x in {1}
        {
            \node[all, data, draw,fill=white] (b\x) at ({2 + (\x - 1) * 2.5}, 0) {
                \pgfmathtruncatemacro{\xx}{\x - 1}
                {\hspace{3ex}}\nodepart{two}$\cmd$
            };
        }
        \foreach \x in {5}
        {
            \node[all, data, draw,fill=white] (b\x) at ({2 + (\x - 1) * 2.5}, 0) {
                \pgfmathtruncatemacro{\xx}{\x - 1}
            };
        }
        \newcommand{\nodes}{{"$b: v_3$", "$b': v_4$", "$b'': v_5$", "$b^{*}: v_6$", "", "$v_8$"}}
        \foreach \x in {2,3,4,6}
        {
            \node[all, data, draw,fill=white] (b\x) at ({2 + (\x - 1) * 2.5}, 0) {
                \pgfmathtruncatemacro{\xx}{\x - 1}
                {\color{darkorange}\QC}\nodepart{two}$\cmd$
            };
            \pgfmathtruncatemacro{\xt}{\x + 2}
        }
        \foreach \x in {1,2,3,4,6}
        {
            \pgfmathsetmacro{\bv}{\nodes[\x - 1]}
            \node[anchor=south, yshift={1em}] (b\x{}l) at (b\x.north) {\bv};
        }
        \begin{scope}[line width=0.2ex]
        \foreach \x in {1,...,6}
        {
            \pgfmathtruncatemacro{\xx}{\x - 1}
            \path[<-] (b\xx) edge[out=0, in=180] node[sloped,above] {} (b\x);
        }
        \end{scope}
        \foreach \x in {2,3,4}
        {
            \pgfmathtruncatemacro{\xx}{\x - 1}
            \begin{scope}[color=orange,line width=0.2ex]
                \path[-implies] (b\x.north) ++ (-0.3, 0) edge[double distance=0.4ex, out=90, in=north east] node[sloped,above] {} (b\xx);
            \end{scope}
        }
        \foreach \x in {6}
        {
            \pgfmathtruncatemacro{\xx}{\x - 2}
            \begin{scope}[color=orange,line width=0.2ex]
                \path[-implies] (b\x.north) ++ (-0.3, 0) edge[double distance=0.4ex, out=90, in=north east] node[sloped,above] {} (b\xx);
            \end{scope}
        }
    \end{scope}
\end{tikzpicture}
\caption{The nodes at views $v_4, v_5, v_6$ form a Three-Chain. The node at
view $v_8$ does not make a valid One-Chain in \HotStuffPro{} (but it is a valid
One-Chain after relaxation in the algorithm of
Section~\ref{sec:implementation}).}
\label{fig:three-chain}
\end{figure*}

\section{\HotStuffPro{}}
\label{sec:hspro}

It takes three phases for a \HotStuffBasic{} leader to commit a
proposal. These phases are not doing ``useful'' work except collecting votes
from replicas, and they are all very similar.
In \HotStuffPro{}, we improve the \HotStuffBasic{} protocol utility while at
the same time considerably simplifying it. The idea is to change the view on
\emph{every \PREPARE{} phase}, so each proposal has its own view.
This reduces the number of message types and allows for pipelining of decisions.
A similar approach for message type reduction was suggested in Casper~\cite{CasperOneMessage}.

More specifically, in \HotStuffPro{} the votes over a \PREPARE{} phase are collected in a view by the leader into a
$\genericQC$. Then the $\genericQC$ is  relayed to the leader of the next view,
essentially delegating responsibility for the next phase, which would have been
\KEY, to the next leader.
However, the next leader does not actually carry a \KEY{} phase, but instead initiates a
new \PREPARE{} phase and adds its own proposal.
This \PREPARE{} phase for view $v+1$
simultaneously serves as the \KEY{} phase for view $v$.
The \PREPARE{} phase for view $v+2$
simultaneously serves as the \KEY{} phase for view $v+1$ and
as the \COMMIT{} phase for view $v$.
This is possible because all the phases have identical structure.

The pipeline of \HotStuffBasic{} protocol phases embedded in a chain
of \HotStuffPro{} proposals is depicted in Figure~\ref{fig:pipelining}.
Views $v_1, v_2, v_3$ of \HotStuffPro{} serve as the \PREPARE, \KEY, and \COMMIT{} \HotStuffBasic{} phases for $cmd_1$ proposed in $v_1$.
This command becomes committed by the end of $v_4$. 
Views $v_2, v_3, v_4$ serve as the three \HotStuffBasic{} phases
for $cmd_2$ proposed in $v_2$, and it becomes committed by the end of $v_5$.
Additional proposals generated in these phases continue the pipeline similarly,
and are denoted by dashed boxes.  In Figure~\ref{fig:pipelining}, a single arrow
denotes the $\fParent{b}$ field for a node $b$, and a double arrow
denotes $\fBranch{\msgQC{b}}$. 

Hence, there are only two types of messages in \HotStuffPro{}, a \NEWVIEW{} message and
generic-phase \GENERIC{} message. 
The \GENERIC{} QC functions in all logically pipelined phases.
We next explain the mechanisms in the pipeline
to take care of locking and committing, which occur only in the \COMMIT{} and
\DECIDE{} phases of \HotStuffBasic.

\paragraph{Dummy nodes.}

The $\genericQC$ used by a leader in some view $\viewNumber$ may not
directly reference the proposal of the preceding view $(\viewNumber-1)$.
The reason is that the leader of a preceding view fails to obtain a \QC{},
either because there are conflicting proposals, or due to a benign crash.
To simplify the tree structure, 
\Call{createLeaf}{} extends $\fBranch{\genericQC}$ with blank nodes up to the height (the number of parent links on a node's branch) of
the proposing view, so view-numbers are equated with node heights.
As a result, the \QC\ embedded in a node $b$ may not refer to its
parent, i.e., $\fBranch{\fJustify{b}}$ may not equal $\fParent{b}$ (the last node in Figure~\ref{fig:three-chain}).

\paragraph{One-Chain, Two-Chain, and Three-Chain.}

When a node $b^{*}$ carries a \QC\ that refers to a direct parent,
i.e., $\fBranch{\fJustify{b^{*}}} = \fParent{b^{*}}$,
we say that it forms a \emph{One-Chain}.
Denote by $b'' = \fBranch{\fJustify{b^{*}}}$.
Node $b^{*}$ forms a \emph{Two-Chain},
if in addition to forming a One-Chain,
$\fBranch{\fJustify{b''}} = \fParent{b''}$.
It forms a \emph{Three-Chain}, if $b''$ forms a Two-Chain.

Looking at chain $b = \fBranch{\fJustify{b'}}$, $b' = \fBranch{\fJustify{b''}}$, $b'' =
\fBranch{\fJustify{b^{*}}}$, ancestry gaps might occur at any one of the nodes. 
These situations are similar to a leader of \HotStuffBasic{} failing to complete
any one of three phases, and getting interrupted to the next view by \Call{nextView}{}.

If $b^{*}$ forms a One-Chain, the \PREPARE{} phase of $b''$ has succeeded.
Hence, when a replica votes for $b^{*}$, it should remember
$\genericQC \gets \fJustify{b^{*}}$.
We remark that it is safe to update $\genericQC$ even when a One-Chain is not direct,
so long as it is higher than the current $\genericQC$.
In the implementation code described in Section~\ref{sec:implementation}, we indeed
update $\genericQC$ in this case.

If $b^{*}$ forms a Two-Chain, then the \KEY{} phase of $b'$ has succeeded.
The replica should therefore update $\lqc$ $\gets$ $\fJustify{b''}$. 
Again, we remark that the lock can be
updated even when a Two-Chain is not direct---safety will not break---and
indeed, this is given in the
implementation code in Section~\ref{sec:implementation}.

Finally, if $b^{*}$ forms a Three-Chain, the \COMMIT{} phase of $b$ has succeeded,
and $b$ becomes a committed decision.

%The pseudocode starting at Line~\ref{line:hs-check-commit} of
%Algorithm~\ref{alg:hotstuff-pipelined} checks whether a proposal can be committed.

\multiversions{}{\medskip}

Algorithm~\ref{alg:hotstuff-pipelined} shows the pseudocode for \HotStuffPro{}.
\multiversions{%
The proof of safety given by \cite{HotStuff} is similar to the one for \HotStuffBasic.
}{%
The proof of safety given by Theorem~\ref{thm:hotstuff-safety} in Appendix~\ref{app:hs-proof} is similar to the one for \HotStuffBasic.
}
We require the \QC{} in a valid node refers to
its ancestor. For brevity, we assume the constraint always holds and omit checking in the code.
\begingroup
\captionof{algorithm}{\HotStuffPro{} protocol.}\label{alg:hotstuff-pipelined}
\multiversions{\setstretch{0.93}}{}
\small\begin{algorithmic}[1]

\Procedure{createLeaf}{$\parent, \cmd, \qc$}
\multiversions{%
    \State $\fParent{b} \gets$ branch extending with blanks from $\parent$ to
    height $\gViewNumber$; $\fCmd{b} \gets \cmd$; $\fJustify{b} \gets \qc$; \Return $b$
}{%
    \State $\fParent{b} \gets$ branch extending with blanks from $\parent$ to
    height $\gViewNumber$
    \State $\fCmd{b} \gets \cmd$
    \State $\fJustify{b} \gets \qc$
    \State \Return $b$
}
\EndProcedure
\Statex
\For{$\gViewNumber \gets 1, 2, 3, \ldots$}

\Phase{\GENERIC{} phase}
\As{a leader} \InlineComment{$r = \leader{}$}
\multiversions{%
    \State wait for $(n-f)$ \NEWVIEW{} messages:
    \Statex\hspace{5em}$M \gets \{ m \mid \Call{matchingMsg}{m, \NEWVIEW, \gViewNumber-1} \}$\label{line:hs-generic-collect}
}{%
    %\State wait for $(n-f)$ \NEWVIEW{} messages: $M \gets \{ m \mid \Call{matchingMsg}{m, \NEWVIEW, \gViewNumber-1} \}$\label{line:hs-generic-collect}
}
    \LineComment{$M$ is the set of messages collected at the end of previous view by the leader of this view}
    \State $\hqc \gets \displaystyle\msgQC{\left(\argmax_{m \in M}\{\msgQCViewNumber{m}\}\right)}$
    \If{$\fViewNumber{\hqc} > \fViewNumber{\genericQC}$} $\genericQC \gets \hqc$\EndIf
\multiversions{%
    \State $\gBranch \gets \textsc{createLeaf}(\fBranch{\genericQC},$
    \Statex\hspace{15em}$\textrm{client's command}, \genericQC)$
}{%
    \State $\gBranch \gets \textsc{createLeaf}(\fBranch{\genericQC}, \textrm{client's command}, \genericQC)$
}
    \LineComment{\PREPARE{} phase \multiversions{(leader-half)}{}}
    \State broadcast \Call{Msg}{$\GENERIC, \gBranch, \bot$}
\EndAs

\As{a replica}
\multiversions{%
    \State wait for message $m$ from $\leader{}$
    \Statex\hspace{5em}$m: \Call{matchingMsg}{m, \GENERIC{}, \gViewNumber}$
    \State $b^{*} \gets \msgBranch{m}$; $b'' \gets \fBranch{\fJustify{b^{*}}}$;
    \Statex\hspace{3em}$b' \gets \fBranch{\fJustify{b''}}$; $b \gets \fBranch{\fJustify{b'}}$
}{%
    \State wait for message $m: \Call{matchingMsg}{m, \GENERIC{}, \gViewNumber}$ from $\leader{}$
    \State $b^{*} \gets \msgBranch{m}$; $b'' \gets \fBranch{\fJustify{b^{*}}}$; $b' \gets \fBranch{\fJustify{b''}}$; $b \gets \fBranch{\fJustify{b'}}$
}
    \If {\Call{safeNode}{$b^{*}, \fJustify{b^{*}}$}}
        \State send $\Call{voteMsg}{\GENERIC, b^{*}, \bot}$ to
\multiversions{%
$\leader{}$
}{%
$\nextLeader{}$
}
    \EndIf
    \LineComment{start \KEY{} phase on $b^{*}$'s parent}
    \If {$\fParent{b^{*}} = b''$}
        \State {\color{myblue}$\genericQC \gets \fJustify{b^{*}}$}
    \EndIf
    \LineComment{start \COMMIT{} phase on $b^{*}$'s grandparent}
    \If {$(\fParent{b^{*}} = b'') \land (\fParent{b''} = b')$}
        \State {\color{myblue} $\lqc \gets \fJustify{b''}$}\label{line:hs-update-lock2}
    \EndIf
    \LineComment{start \DECIDE{} phase on $b^{*}$'s great-grandparent}
\multiversions{%
    \If {$(\fParent{b^{*}} = b'') \land (\fParent{b''} = b') \land{}$
    \Statex\hspace{5em}$(\fParent{b'} = b)$}\label{line:hs-check-commit}
}{%
    \If {$(\fParent{b^{*}} = b'') \land (\fParent{b''} = b') \land (\fParent{b'} = b)$ }\label{line:hs-check-commit}
}
        \State execute new commands through $b$, respond to clients
    \EndIf
\EndAs

\multiversions{%
\As{a leader} \InlineComment{\KEY{} phase (leader-half)}
    \State wait for $(n-f)$ votes:
    \Statex\hspace{5em}$V \gets \{ v \mid \Call{matchingMsg}{v, \GENERIC, \gViewNumber} \}$
}{%
\As{the next leader}
    \State wait for all messages: $M \gets \{ m \mid \Call{matchingMsg}{m, \GENERIC, \gViewNumber} \}$
    \Statex\hspace{5em}until there are $(n-f)$ votes: $V \gets \{ v \mid \msgSig{v} \neq \bot \land v \in M \}$
}
    \State $\genericQC \gets \Call{QC}{V}$
\EndAs

%\LineComment{for liveness, the message here counts as $(n - f)$ at Line~\ref{line:hs-generic-collect}}
\multiversions{%
\As{the next leader}
    \State wait for message $m$ from $\leader{}$
    \Statex\hspace{5em}$m: \Call{matchingMsg}{m, \NEWVIEW{}, \gViewNumber}$\label{line:hs-new-view-wait}
\EndAs
}{%
    %\State wait for message $m: \Call{matchingMsg}{m, \NEWVIEW{}, \gViewNumber}$ from $\leader{}$\label{line:hs-new-view-wait}
}

\Phase{Finally}
\multiversions{%
\State \Call{nextView}{} interrupt: goto this line if $\Call{nextView}{\gViewNumber}$ is
\Statex\hspace{1.5em}called during ``wait for'' in any phase
}{%
\State \Call{nextView}{} interrupt: goto this line if
$\Call{nextView}{\gViewNumber}$ is called during ``wait for'' in any phase
}
\State\color{myblue}send $\Call{Msg}{\GENERIC, \bot, \genericQC}$ to $\nextLeader$
\EndFor
\end{algorithmic}

\endgroup

\section{Implementation}\label{sec:implementation}
\HotStuff is a practical protocol for building efficient SMR systems. 
Because of its simplicity, we can easily
turn Algorithm~\ref{alg:hotstuff-pipelined} into an event-driven-style
specification that is almost like the code skeleton for a prototype implementation. 

As shown in Algorithm~\ref{alg:hotstuff-incumbent-code}, the code is further simplified and
generalized by extracting the liveness mechanism from the body into a module
named \emph{\Gadget{}}. Instead of the next leader always waiting for a $\genericQC{}$ at the end of the
\GENERIC{} phase before starting its reign, this logic is delegated to the \Gadget. A
stable leader can skip this step and streamline proposals across multiple heights. 
Additionally, we relax the direct parent constraint
for maintaining the highest $\genericQC$ and $\lqc$, while still preserving the
requirement that the \QC{} in a valid node always refers to its ancestor.
%We achieve this by
%unifying the sequence number and view number together as the \emph{height} of a
%node on the local tree.
The proof of correctness is similar to \HotStuffPro and we also defer it to the
appendix of \cite{HotStuff}.

\paragraph{Data structures.}
Each replica $u$ keeps track of the following main state variables:

\vspace{0.7em}
\begin{tabular}{rp{21em}}
    $\acsetof{\cdot}$ & mapping from a node to its votes. \\
    %\item$\qcsetof{\,} \gets \{\}$\Comment{mapping from a node to its known QC.}
    $\vheight$ & height of last voted node. \\
    $\Lqc$ & locked node (similar to $\lqc$). \\
    $\Lexec$ & last executed node. \\
    $\Hqc$ & highest known QC (similar to $\genericQC$) kept by a \Gadget. \\
    $\Ltail$ & leaf node kept by a \Gadget.
\end{tabular}
\multiversions{\vspace{0.2em}}{\vspace{0.7em}}\\
It also keeps a constant $\Lvl_0$, the same genesis node known by all correct
replicas. To bootstrap, $\Lvl_0$ contains a hard-coded QC for itself, $\Lqc,
\Lexec, \Ltail$ are all initialized to $\Lvl_0$, and $\Hqc$ contains the QC for $\Lvl_0$.

\paragraph{\Gadget{}.}

A \Gadget{} is a mechanism that guarantees progress after GST. It achieves this
through two ingredients.

The first one is ``synchronization'', bringing all correct replicas, and a
unique leader, into a common height for a sufficiently long period. The
usual synchronization mechanism in the
literature~\cite{DLS88,CL99,TendermintThesis} is for replicas to increase
the count of $\Delta$'s they spend at larger heights, until progress is being
made. A common way to deterministically elect a leader is to use a rotating
leader scheme in which all correct replicas keep a predefined leader
schedule and rotate to the next one when the leader is demoted.

Second, a \Gadget{} needs to provide the leader with a way to choose a proposal
that will be supported by correct replicas. As shown in
Algorithm~\ref{alg:hotstuff-incumbent-pmaker}, after a view change, in
\Call{onReceiveNewView}{}, the new leader collects \NEWVIEW{} messages sent by
replicas through \Call{onNextSyncView}{} to discover the highest QC to satisfy
the second part of the condition in \Call{onReceiveProposal}{} for liveness
(Line~\ref{line:hs-incumbent-liveness} of Algorithm~\ref{alg:hotstuff-incumbent-code}). During the same view, however, the incumbent leader will chain
the new node to the end of the leaf last proposed by itself, where no
\NEWVIEW{} message is needed.  Based on some
application-specific heuristics (to wait until the previously proposed node
gets a QC, for example), the current leader invokes \Call{onBeat}{} to propose
a new node carrying the command to be executed.

It is worth noting that even if a bad \Gadget{} invokes $\textsc{onPropose}$
arbitrarily, or selects a parent and a QC capriciously, and against any
scheduling delays, safety is always guaranteed. Therefore, safety guaranteed by
Algorithm~\ref{alg:hotstuff-incumbent-code} alone is entirely decoupled from
liveness by any potential instantiation of
Algorithm~\ref{alg:hotstuff-incumbent-pmaker}.

\begingroup
\captionof{algorithm}{Event-driven \HotStuff (for replica $u$).}\label{alg:hotstuff-incumbent-code}
\begin{center}
\small%
\multiversions{%
    \begin{algorithmic}[1]
    \Procedure{createLeaf}{$\parent, \cmd, \qc, \height$}
        \State $\fParent{b} \gets \parent$; $\fCmd{b} \gets \cmd$;
        \State $\fJustify{b} \gets \qc$; $\fHeight{b} \gets \height$; \Return $b$
    \EndProcedure
    \Procedure{update}{$b^{*}$}
       \State $b'' \gets {\fBranch{\fJustify{b^{*}}}}$; $b' \gets \fBranch{\fJustify{{b''}}}$
       \State $b \gets \fBranch{\fJustify{{b'}}}$
       \LineComment{\KEY{} phase on $b''$}
       \State\Call{updateQCHigh}{$\fJustify{b^{*}}$}
       \If{$\fHeight{b'} > \fHeight{\Lqc}$} 
        \State $\Lqc \gets b'$\InlineComment{\COMMIT{} phase on $b'$}
       \EndIf
       \If{$(\fParent{b''} = b') \land (\fParent{b'} = b)$}
            \State\Call{onCommit}{$b$}
            \State$\Lexec \gets b$\InlineComment{\DECIDE{} phase on $b$}
       \EndIf
    \EndProcedure

    \Procedure{onCommit}{$b$}
        \If{$\fHeight{\Lexec} < \fHeight{b}$}
            \State $\Call{onCommit}{\fParent{b}}$; $\Call{execute}{\fCmd{b}}$
        \EndIf
    \EndProcedure
 
     \Procedure{onReceiveProposal}{\Call{$\textsc{Msg}_v$}{$\GENERIC, \Lnew, \bot$}}
         \If{$\fHeight{\Lnew} > \vheight \land (\Lnew\textrm{ extends }\Lqc \lor$\label{line:hs-incumbent-conflict}\\
            \hspace{5ex}$\fHeight{\fBranch{\fJustify{\Lnew}}} > \fHeight{\Lqc})$}\label{line:hs-incumbent-liveness}
             \State$\vheight \gets \fHeight{\Lnew}$
             \State\Call{send}{\Call{getLeader}{{}},
\Call{$\textsc{voteMsg}_u$}{$\GENERIC, \Lnew, \bot$}}
         \EndIf
         \State\Call{update}{$\Lnew$}
     \EndProcedure
 
    \Procedure{onReceiveVote}{\Call{$m = \textsc{voteMsg}_v$}{$\GENERIC, b, \bot$}}
         \If{$\exists \langle v, \sigma'\rangle \in \acsetof{b}$}
             \Return \InlineComment{avoid duplicates}\EndIf
         \State$\acsetof{b} \gets \acsetof{b} \cup \{\langle v, \msgSig{m}\rangle\}$\InlineComment{collect votes}
         \If{$|\acsetof{b}| \ge n - f$}
            \State$\qc \gets \Call{QC}{\{\sigma \mid \langle v', \sigma \rangle \in \acsetof{b}\}}$
            \State$\Call{updateQCHigh}{\qc}$
         \EndIf
     \EndProcedure
 
    \Function{onPropose}{$\Ltail, \cmd, \Hqc$}
         \State$\begin{aligned}
             \Lnew \gets
             \Call{createLeaf}{\Ltail, \cmd, \Hqc, \fHeight{\Ltail} + 1}
         \end{aligned}$
         \LineComment{send to all replicas, including $u$ itself}
         \State$\textsc{broadcast}(\Call{$\textsc{Msg}_u$}{\GENERIC, \Lnew, \bot})$
         \State\Return$\Lnew$
     \EndFunction
\end{algorithmic}

}{%
    \begin{minipage}{\textwidth}
        \begin{multicols}{2}
            
        \end{multicols}
    \end{minipage}
}
\end{center}
\endgroup
\begingroup
\captionof{algorithm}{Code skeleton for a \Gadget{} (for replica $u$).}\label{alg:hotstuff-incumbent-pmaker}
\begin{center}
\small%
%\setstretch{0.95}
\multiversions{%
    \begin{algorithmic}[1]
    \LineComment{We assume \Gadget{} in all correct replicas will have
    synchronized leadership after GST.}
    \Function{getLeader}{}
        \InlineComment{\ldots specified by the application}
    \EndFunction
    \Procedure{updateQCHigh}{$\Hqc'$}
        \If{$\fHeight{\fBranch{\Hqc'}} > \fHeight{\fBranch{\Hqc}}$}
            \State$\Hqc \gets \Hqc'$
            \State$\Ltail \gets \fBranch{\Hqc}$
        \EndIf
    \EndProcedure
    \Procedure{onBeat}{$\cmd$}
        \If{$u = \Call{getLeader}{{}}$}
            \State$\Ltail \gets \Call{onPropose}{\Ltail, \cmd, \Hqc}$
        \EndIf
    \EndProcedure
    \Procedure{onNextSyncView}{}
        \State send $\Call{Msg}{\NEWVIEW, \bot, \Hqc}$ to $\Call{getLeader}{{}}$
    \EndProcedure
    \Procedure{onReceiveNewView}{\Call{$\textsc{Msg}$}{$\NEWVIEW, \bot, \Hqc'$}}
        \State\Call{updateQCHigh}{$\Hqc'$}
    \EndProcedure
\end{algorithmic}

}{%
    \begin{minipage}{\textwidth}
        \begin{multicols}{2}
            
        \end{multicols}
    \end{minipage}
}
\end{center}
\endgroup
\begingroup
\captionof{algorithm}{\Call{update}{} replacement for two-phase \HotStuff{}.}\label{alg:hotstuff-two-step-code}
\begin{center}
\small%
%\setstretch{0.95}
\begin{algorithmic}[1]
    \Procedure{update}{$b^{*}$}
       \State $b' \gets \fBranch{\fJustify{b^{*}}}$ ;
       $b \gets \fBranch{\fJustify{{b'}}}$
       \State\Call{updateQCHigh}{$\fJustify{b^{*}}$}
       \If{$\fHeight{b'} > \fHeight{\Lqc}$}
            $\Lqc \gets b'$
       \EndIf
       \If{$(\fParent{b'} = b)$}
            \Call{onCommit}{$b$}; $\Lexec \gets b$
       \EndIf
    \EndProcedure
\end{algorithmic}

\end{center}
\endgroup

\paragraph{Two-phase \HotStuff variant.}
To further demonstrate the flexibility of the \HotStuff framework, 
Algorithm~\ref{alg:hotstuff-two-step-code} shows the two-phase variant of
\HotStuff. Only the \Call{update}{} procedure is affected, a Two-Chain is
required for reaching a commit decision, and a One-Chain determines the lock. 
As discussed above (Section~\ref{hs:correct}),
this two-phase variant loses Optimistic Responsiveness, 
and is similar to Tendermint/Casper. 
The benefit is fewer phases, 
while liveness may be addressed by incorporating
in \Gadget{} a wait based on maximum network
delay. \multiversions{%

\paragraph{Evaluation.}
Due to the space limitation, we defer our evaluation results to the longer
paper~\cite{HotStuff}.  There, we compare our implementation to
BFT-SMaRt~\cite{BFTSMaRt}, a state-of-the-art implementation based on a two-phase
PBFT variant.  We show that even though three-phase \HotStuff{} has an additional
phase for its responsiveness and uses digital signatures universally (where
BFT-SMaRt only uses MACs for votes), it still
achieves similar latency, while being able to outperform BFT-SMaRt in
throughput. It also scales better than BFT-SMaRt.
}{%
See Section~\ref{sec:comparison-tendermint-casper} for further discussion.
}

\begingroup
\begin{center}
%\hspace{-0.05\textwidth}
%\begin{tabular}{ccc}
%    \includegraphics[trim={0 1.6in 0 3in}, width=0.22\linewidth]{blockchainfigures/one-chain.pdf} &
%    \includegraphics[trim={0 2.6in 0 3in}, width=0.3\linewidth]{blockchainfigures/two-chain.pdf} &
%    \includegraphics[trim={0 3in 0 3in}, width=0.3\linewidth]{blockchainfigures/tm.pdf} \\
%
%	(a) One-Chain (DLS, 1988) &
%    (b) Two-Chain (PBFT, 1999) &
%	(c) Two-Chain w/ delay (Tendermint, 2016) \\
%
%    \multicolumn{3}{c}{
%        \begin{tabular}{cc}
%        \includegraphics[trim={0 3in 0 4in}, width=0.38\linewidth]{blockchainfigures/casper.pdf} &
%        \includegraphics[trim={0 3.22in 0 4in}, width=0.48\linewidth]{blockchainfigures/hs.pdf} \\
%	    (d) Two-Chain w/ delay (Casper, 2017) &
%        (d) Three-Chain (\HotStuff, 2018) \\
%        \end{tabular}}
%\end{tabular}
\begin{tabular}{ll}
    \begin{tikzpicture}[x=1.12cm, scale=0.9, every node/.append style={transform shape}]
    \tikzstyle{block}=[drop shadow={shadow xshift=0.3ex,shadow yshift=-0.3ex}, rounded corners=0.7ex]
    \tikzstyle{data}=[rectangle split,rectangle split horizontal,rectangle split parts=2,rectangle split part align=base,draw,text centered,block]
    \definecolor{darkorange}{HTML}{d95319}
    \definecolor{orange}{HTML}{eb7645}
    \begin{scope}[all/.style={draw, minimum height=0.7cm, minimum width=0.7cm},line width=0.08ex]
        \node[all, data, draw,fill=white] (b0) at (0, 0) {{\color{darkorange}\QC}\nodepart{two}};
        \node[all, data, draw,fill=white] (b1) at (2, 0) {{\color{darkorange}\QC}\nodepart{two}};
        \node[all, draw=none,fill=none] (c0) at (-2, 0) {$\cdots$};
        \node[anchor=south] (b0l) at (b0.north) {$b$};
        \node[anchor=south] (b1l) at (b1.north) {$b^{*}$};
        \node[anchor=north] (h0) at (b0.south) {height $k$};
        \node[anchor=north] (h1) at (b1.south) {height $k + 1$};
        \begin{scope}[dotted, line width=0.2ex, color=gray]
        \draw ($(b0.north west) + (-0.5, 0.3)$)  -- ++(0, -1.5);
        \draw ($(b1.north west) + (-0.5, 0.3)$)  -- ++(0, -1.5);
        \draw ($(b1.north east) + (0.5, 0.3)$)  -- ++(0, -1.5);
        \end{scope}
        \begin{scope}[line width=0.2ex]
        \path[<-] (b0) edge[out=0, in=180] node[sloped,above] {} (b1) ;
        \path[<-] (c0) edge[out=0, in=180] node[sloped,above] {} (b0) ;
        \end{scope}
        \begin{scope}[color=orange,line width=0.2ex]
            \path[-implies] (b1.north) ++ (-0.3, 0) edge[double distance=0.4ex, out=90, in=north east] node[sloped,above] {} (b0);
        \end{scope}
    \end{scope}
\end{tikzpicture}
 &
    \begin{tikzpicture}[x=1.12cm, scale=0.9, every node/.append style={transform shape}]
    \tikzstyle{block}=[drop shadow={shadow xshift=0.3ex,shadow yshift=-0.3ex}, rounded corners=0.7ex]
    \tikzstyle{data}=[rectangle split,rectangle split horizontal,rectangle split parts=2,rectangle split part align=base,draw,text centered,block]
    \definecolor{darkorange}{HTML}{d95319}
    \definecolor{orange}{HTML}{eb7645}
    \begin{scope}[all/.style={draw, minimum height=0.7cm, minimum width=0.7cm},line width=0.08ex]
        \node[all, data, draw,fill=white] (b0) at (0, 0) {{\color{darkorange}\QC}\nodepart{two}};
        \node[all, data, draw,fill=white] (b1) at (2, 0) {{\color{darkorange}\QC}\nodepart{two}};
        \node[all, data, draw,fill=white] (b2) at (4, 0) {{\color{darkorange}\QC}\nodepart{two}};
        \node[all, draw=none,fill=none] (c0) at (-2, 0) {$\cdots$};
        \node[anchor=south] (b0l) at (b0.north) {$b$};
        \node[anchor=south] (b1l) at (b1.north) {$b'$};
        \node[anchor=south] (b2l) at (b2.north) {$b^{*}$};
        \node[anchor=north] (h0) at (b0.south) {height $k$};
        \node[anchor=north] (h1) at (b1.south) {height $k + 1$};
        \node[anchor=north] (h2) at (b2.south) {height $k + 2$};
        \begin{scope}[dotted, line width=0.2ex, color=gray]
        \draw ($(b0.north west) + (-0.5, 0.3)$)  -- ++(0, -1.5);
        \draw ($(b1.north west) + (-0.5, 0.3)$)  -- ++(0, -1.5);
        \draw ($(b2.north west) + (-0.5, 0.3)$)  -- ++(0, -1.5);
        \draw ($(b2.north east) + (0.5, 0.3)$)  -- ++(0, -1.5);
        \end{scope}
        \begin{scope}[line width=0.2ex]
        \path[<-] (b1) edge[out=0, in=180] node[sloped,above] {} (b2) ;
        \path[<-] (b0) edge[out=0, in=180] node[sloped,above] {} (b1) ;
        \path[<-] (c0) edge[out=0, in=180] node[sloped,above] {} (b0) ;
        \end{scope}
        \begin{scope}[color=orange,line width=0.2ex]
            \path[-implies] (b1.north) ++ (-0.3, 0) edge[double distance=0.4ex, out=90, in=north east] node[sloped,above] {} (b0);
            \path[-implies] (b2.north) ++ (-0.3, 0) edge[double distance=0.4ex, out=90, in=north east] node[sloped,above] {} (b1);
        \end{scope}
    \end{scope}
\end{tikzpicture}

    \\
    (a) One-Chain (DLS, 1988) &
    (b) Two-Chain (PBFT, 1999)
    \\

    \begin{tikzpicture}[x=1.12cm, scale=0.9, every node/.append style={transform shape}]
    \tikzstyle{block}=[drop shadow={shadow xshift=0.3ex,shadow yshift=-0.3ex}, rounded corners=0.7ex]
    \tikzstyle{data}=[rectangle split,rectangle split horizontal,rectangle split parts=2,rectangle split part align=base,draw,text centered,block]
    \definecolor{darkorange}{HTML}{d95319}
    \definecolor{orange}{HTML}{eb7645}
    \begin{scope}[all/.style={draw, minimum height=0.7cm, minimum width=0.7cm},line width=0.08ex]
        \node[all, data, draw,fill=white] (b0) at (0, 0) {{\color{darkorange}\QC}\nodepart{two}};
        \node[all, data, draw,fill=white] (b1) at (2, 0) {{\color{darkorange}\QC}\nodepart{two}};
        \node[all, data, draw,fill=white] (b2) at (4, 0) {{\color{darkorange}\QC}\nodepart{two}};
        \node[all, draw=none,fill=none] (c0) at (-2, 0) {$\cdots$};
        \node[anchor=south] (b0l) at (b0.north) {$b$};
        \node[anchor=south] (b1l) at (b1.north) {$b'$};
        \node[anchor=south] (b2l) at (b2.north) {$b^{*}$};
        \node[anchor=north, yshift=-3ex] (h0) at (b0.south) {height $k$};
        \node[anchor=north, yshift=-3ex] (h1) at (b1.south) {height $k + 1$};
        \node[anchor=north, yshift=-3ex] (h2) at (b2.south) {height $k + 2$};
        \begin{scope}[dotted, line width=0.2ex, color=gray]
            \draw ($(c0.north east) + (0.4, 0.3)$)  -- ++(0, -1.5) node[inner sep=0] (c0e){};
            \draw ($(b0.north west) + (-0.2, 0.3)$)  -- ++(0, -1.5) node[inner sep=0] (v0s){};
            \draw ($(b0.north east) + (0.2, 0.3)$)  -- ++(0, -1.5) node[inner sep=0] (v0e) {};
            \draw ($(b1.north west) + (-0.2, 0.3)$)  -- ++(0, -1.5) node[inner sep=0] (v1s) {};
            \draw ($(b1.north east) + (0.2, 0.3)$)  -- ++(0, -1.5) node[inner sep=0] (v1e) {};
            \draw ($(b2.north west) + (-0.2, 0.3)$)  -- ++(0, -1.5) node[inner sep=0] (v2s) {};
            \draw ($(b2.north east) + (0.2, 0.3)$)  -- ++(0, -1.5) node[inner sep=0] (v2e) {};
        \end{scope}
        \begin{scope}[line width=0.15ex, color=gray]
            \draw[<->] ($(c0e) + (0, 0.5)$) -- ($(v0s) + (0, 0.5)$) node[below,midway] {$\Delta$};
            \draw[<->] ($(v0e) + (0, 0.5)$) -- ($(v1s) + (0, 0.5)$) node[below,midway] {$\Delta$};
            \draw[<->] ($(v1e) + (0, 0.5)$) -- ($(v2s) + (0, 0.5)$) node[below,midway] {$\Delta$};
        \end{scope}
        \begin{scope}[line width=0.2ex]
        \path[<-] (b1) edge[out=0, in=180] node[sloped,above] {} (b2) ;
        \path[<-] (b0) edge[out=0, in=180] node[sloped,above] {} (b1) ;
        \path[<-] (c0) edge[out=0, in=180] node[sloped,above] {} (b0) ;
        \end{scope}
        \begin{scope}[color=orange,line width=0.2ex]
            \path[-implies] (b1.north) ++ (-0.3, 0) edge[double distance=0.4ex, out=90, in=north east] node[sloped,above] {} (b0);
            \path[-implies] (b2.north) ++ (-0.3, 0) edge[double distance=0.4ex, out=90, in=north east] node[sloped,above] {} (b1);
        \end{scope}
    \end{scope}
\end{tikzpicture}
 &
    \begin{tikzpicture}[x=1.12cm, scale=0.9, every node/.append style={transform shape}]
    \tikzstyle{block}=[drop shadow={shadow xshift=0.3ex,shadow yshift=-0.3ex}, rounded corners=0.7ex]
    \tikzstyle{data}=[rectangle split,rectangle split horizontal,rectangle split parts=2,rectangle split part align=base,draw,text centered,block]
    \definecolor{darkorange}{HTML}{d95319}
    \definecolor{orange}{HTML}{eb7645}
    \begin{scope}[all/.style={draw, minimum height=0.7cm, minimum width=0.7cm},line width=0.08ex]
        \node[all, data, draw,fill=white] (b0) at (0, 0) {{\color{darkorange}\QC}\nodepart{two}};
        \node[all, data, draw,fill=white] (b1) at (2, 0) {{\color{darkorange}\QC}\nodepart{two}};
        \node[all, data, draw,fill=white] (b2) at (4.9, 0) {{\color{darkorange}\QC}\nodepart{two}};
        \node[all, draw=none,fill=none] (c0) at (-2, 0) {$\cdots$};
        \node[all, draw=none,fill=none] (c1) at (3.7, 0) {$\cdots$};
        \node[anchor=south] (b0l) at (b0.north) {$b$};
        \node[anchor=south] (b1l) at (b1.north) {$b'$};
        \node[anchor=south] (b2l) at (b2.north) {$b^{*}$};
        \node[anchor=north, yshift=-3ex] (h0) at (b0.south) {height $k$};
        \node[anchor=north, yshift=-3ex] (h1) at (b1.south) {height $k + 1$};
        \node[anchor=north, yshift=-3ex] (h2) at (b2.south) {height $k'$};
        \begin{scope}[dotted, line width=0.2ex, color=gray]
            \draw ($(c0.north east) + (0.4, 0.3)$)  -- ++(0, -1.5) node[inner sep=0] (c0e){};
            \draw ($(b0.north west) + (-0.2, 0.3)$)  -- ++(0, -1.5) node[inner sep=0] (v0s){};
            \draw ($(b0.north east) + (0.2, 0.3)$)  -- ++(0, -1.5) node[inner sep=0] (v0e) {};
            \draw ($(b1.north west) + (-0.2, 0.3)$)  -- ++(0, -1.5) node[inner sep=0] (v1s) {};
            \draw ($(b1.north east) + (0.2, 0.3)$)  -- ++(0, -1.5) node[inner sep=0] (v1e) {};
            \draw ($(c1.north west) + (-0.1, 0.3)$)  -- ++(0, -1.5) node[inner sep=0] (c1s){};
            \draw ($(b2.north west) + (-0.2, 0.3)$)  -- ++(0, -1.5) node[inner sep=0] (v2s) {};
            \draw ($(b2.north east) + (0.2, 0.3)$)  -- ++(0, -1.5) node[inner sep=0] (v2e) {};
        \end{scope}
        \begin{scope}[line width=0.15ex, color=gray]
            \draw[<->] ($(c0e) + (0, 0.5)$) -- ($(v0s) + (0, 0.5)$) node[below,midway] {$\Delta$};
            \draw[<->] ($(v0e) + (0, 0.5)$) -- ($(v1s) + (0, 0.5)$) node[below,midway] {$\Delta$};
            \draw[<->] ($(v1e) + (0, 0.5)$) -- ($(c1s) + (0, 0.5)$) node[below,midway] {$\Delta$};
        \end{scope}
        \begin{scope}[line width=0.2ex]
        \path[<-] (c1) edge[out=0, in=180] node[sloped,above] {} (b2) ;
        \path[<-] (b1) edge[out=0, in=180] node[sloped,above] {} (c1) ;
        \path[<-] (b0) edge[out=0, in=180] node[sloped,above] {} (b1) ;
        \path[<-] (c0) edge[out=0, in=180] node[sloped,above] {} (b0) ;
        \end{scope}
        \begin{scope}[color=orange,line width=0.2ex]
            \path[-implies] (b1.north) ++ (-0.3, 0) edge[double distance=0.4ex, out=90, in=north east] node[sloped,above] {} (b0);
            \path[-implies] (b2.north) ++ (-0.3, 0) edge[double distance=0.4ex, out=90, in=north east] node[sloped,above] {} (b1);
        \end{scope}
    \end{scope}
\end{tikzpicture}

    \\
    (c) Two-Chain w/ delay (Tendermint, 2016) &
    (d) Two-Chain w/ delay (Casper, 2017)
    \\

    \multicolumn{2}{c}{
        \begin{tabular}{c}
        \begin{tikzpicture}[x=1.12cm, scale=0.9, every node/.append style={transform shape}]
    \tikzstyle{block}=[drop shadow={shadow xshift=0.3ex,shadow yshift=-0.3ex}, rounded corners=0.7ex]
    \tikzstyle{data}=[rectangle split,rectangle split horizontal,rectangle split parts=2,rectangle split part align=base,draw,text centered,block]
    \definecolor{darkorange}{HTML}{d95319}
    \definecolor{orange}{HTML}{eb7645}
    \begin{scope}[all/.style={draw, minimum height=0.7cm, minimum width=0.7cm},line width=0.08ex]
        \node[all, data, draw,fill=white] (b0) at (0, 0) {{\color{darkorange}\QC}\nodepart{two}};
        \node[all, data, draw,fill=white] (b1) at (2, 0) {{\color{darkorange}\QC}\nodepart{two}};
        \node[all, data, draw,fill=white] (b2) at (4, 0) {{\color{darkorange}\QC}\nodepart{two}};
        \node[all, data, draw,fill=white] (b3) at (6.9, 0) {{\color{darkorange}\QC}\nodepart{two}};
        \node[all, draw=none,fill=none] (c0) at (-2, 0) {$\cdots$};
        \node[all, draw=none,fill=none] (c1) at (5.7, 0) {$\cdots$};
        \node[anchor=south] (b0l) at (b0.north) {$b$};
        \node[anchor=south] (b1l) at (b1.north) {$b'$};
        \node[anchor=south] (b2l) at (b2.north) {$b''$};
        \node[anchor=south] (b3l) at (b3.north) {$b^{*}$};
        \node[anchor=north, yshift=-3ex] (h0) at (b0.south) {height $k$};
        \node[anchor=north, yshift=-3ex] (h1) at (b1.south) {height $k + 1$};
        \node[anchor=north, yshift=-3ex] (h2) at (b2.south) {height $k + 2$};
        \node[anchor=north, yshift=-3ex] (h3) at (b3.south) {height $k'$};
        \begin{scope}[dotted, line width=0.2ex, color=gray]
            \draw ($(b0.north west) + (-0.5, 0.3)$)  -- ++(0, -1.5) node[inner sep=0] (v0s){};
            \draw ($(b1.north west) + (-0.5, 0.3)$)  -- ++(0, -1.5) node[inner sep=0] (v1s) {};
            \draw ($(b2.north west) + (-0.5, 0.3)$)  -- ++(0, -1.5) node[inner sep=0] (v2s) {};
            \draw ($(b2.north east) + (0.5, 0.3)$)  -- ++(0, -1.5) node[inner sep=0] (v3e) {};
        \end{scope}
        \begin{scope}[line width=0.2ex]
        \path[<-] (c1) edge[out=0, in=180] node[sloped,above] {} (b3) ;
        \path[<-] (b2) edge[out=0, in=180] node[sloped,above] {} (c1) ;
        \path[<-] (b1) edge[out=0, in=180] node[sloped,above] {} (b2) ;
        \path[<-] (b0) edge[out=0, in=180] node[sloped,above] {} (b1) ;
        \path[<-] (c0) edge[out=0, in=180] node[sloped,above] {} (b0) ;
        \end{scope}
        \begin{scope}[color=orange,line width=0.2ex]
            \path[-implies] (b1.north) ++ (-0.3, 0) edge[double distance=0.4ex, out=90, in=north east] node[sloped,above] {} (b0);
            \path[-implies] (b2.north) ++ (-0.3, 0) edge[double distance=0.4ex, out=90, in=north east] node[sloped,above] {} (b1);
            \path[-implies] (b3.north) ++ (-0.3, 0) edge[double distance=0.4ex, out=90, in=north east] node[sloped,above] {} (b2);
        \end{scope}
    \end{scope}
\end{tikzpicture}
 \\
        (e) Three-Chain (\HotStuff, 2018) \\
        \end{tabular}}
\end{tabular}


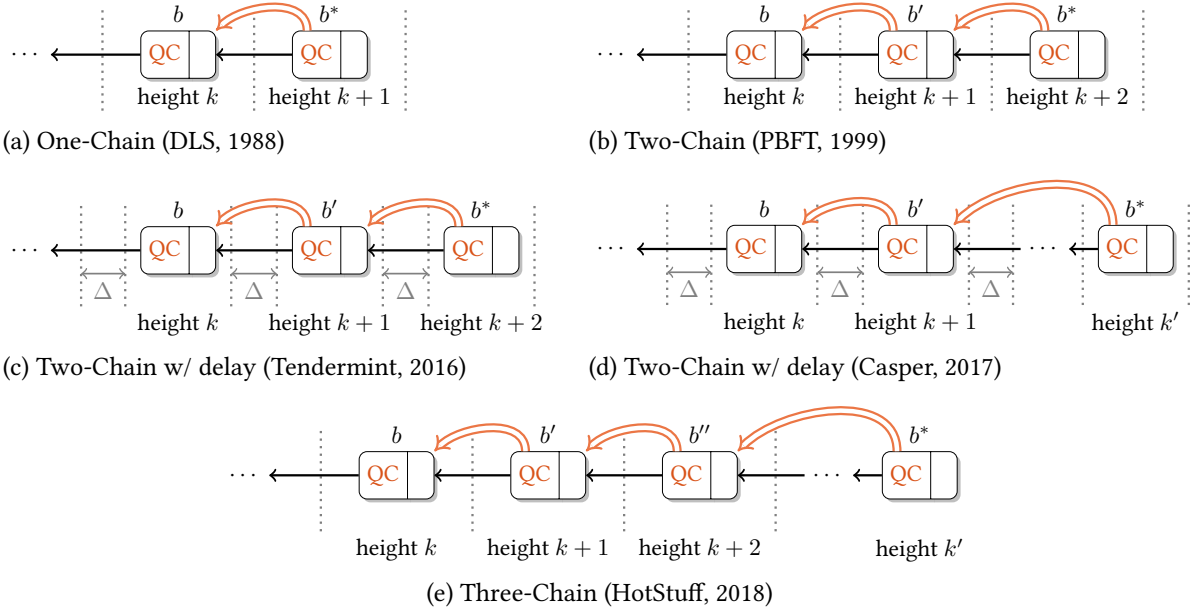
\captionof{figure}{Commit rules for different BFT protocols.}\label{fig:comparison}
\end{center}
\endgroup

\section{One-Chain and Two-Chain BFT Protocols}

In this section, we examine four BFT replication protocols
spanning four decades of research in Byzantine fault tolernace, casting them
into a chained framework similar to \HotStuffPro.

Figure~\ref{fig:comparison} provides a birds-eye view of the commit rules of
five protocols we consider, including \HotStuff.

In a nutshell,
the commit rule in DLS~\cite{DLS88} is One-Chain, allowing a node to be committed
only by its own leader.
The commit rules in PBFT~\cite{CL99}, Tendermint~\cite{TendermintThesis,TendermintGossip} and
Casper~\cite{Casper} are almost identical, and consist of Two-Chains. They differ
in the mechanisms they introduce for liveness, PBFT has leader ``proofs'' of
quadratic size (no Linearity), Tendermint and Casper introduce a mandatory $\Delta$ delay
before each leader proposal (no Optimistic Responsiveness).
\HotStuff{} uses a Three-Chain rule, and has a linear leader protocol without
delay.

% *----*----*----*-----
\subsection{DLS}

The simplest commit rule is a One-Chain.
Modeled after Dwork, Lynch, and Stockmeyer (DLS),
the first known asynchronous Byzantine Consensus solution, 
this rule is depicted in Figure~\ref{fig:comparison}(a).
A replica becomes locked in DLS on the highest node it voted for.

Unfortunately, this rule would easily lead to a deadlock if at some height,
a leader equivocates, and two correct replicas became locked on the conflicting
proposals at that height.
Relinquishing either lock is unsafe unless there are $2f+1$ that indicate they
did not vote for the locked value.

Indeed, in DLS only the leader of each height can itself reach a commit
decision by the One-Chain commit rule. Thus, only the leader itself is harmed if it
has equivocated. Replicas can relinquish a lock either if $2f+1$ replicas did
not vote for it, or if there are conflicting proposals (signed by the leader).
The unlocking protocol occurring at the end of each height in DLS turns out to be fairly complex
and expensive. Together with the fact that only the leader for a height can
decide, in the best scenario where no fault occurs and the network
is timely, DLS requires $n$ leader rotations, and $O(n^4)$ message
transmissions, per single decision. While it broke new ground in demonstrating a safe asynchronous
protocol, DLS was not designed as a practical solution.

% *----*----*----*-----
\subsection{PBFT}

Modeled after PBFT, a more practical appraoch uses a Two-Chain
commit rule, see Figure~\ref{fig:comparison}(b).
When a replica votes for a node that forms a One-Chain, it becomes locked on
it.
Conflicting One-Chains at the same height are simply not possible, as each has a
\QC, hence the deadlock situation of DLS is avoided.

However, if one replica holds a higher lock than others, a leader may not
know about it even if it collects information from $n-f$ replicas.
This could prevent leaders from reaching decisions ad infinitum, purely due to
scheduling.
To get ``unstuck'', the PBFT unlocks all replicas by
carrying a \emph{proof} consisting of the highest One-Chain's by
$2f+1$ replicas.
This proof is quite involved, as explained below.

The original PBFT, which has been
open-sourced~\cite{CL99} and adopted in several follow up
works~\cite{BFTSMaRt,KAD09}, a leader proof contains a set of messages collected from
$n-f$ replicas reporting the highest One-Chain each member voted for. Each
One-Chain contains a \QC, hence the total communication cost is $O(n^3)$.
Harnessing signature combining methods from~\cite{R95,CKS05},
SBFT~\cite{SBFT} reduces this cost to $O(n^2)$ by turning each \QC{} to a single value.

In the PBFT variant in \cite{CL02}, a leader proof contains
the highest One-Chain the leader collected from the quorum only once. It also includes one signed
value from each member of the quorum, proving that it did not vote for a higher
One-Chain.
Broadcasting this proof incurs communication complexity $O(n^2)$.
Note that whereas the signatures on a \QC{} may be combined into a single value, the proof
as a whole cannot be reduced to constant size
because messages from different members of the quorum may have different values.

In both variants, a correct replica unlocks
even it has a higher One-Chain than the leader's proof.
Thus, a correct leader can force its proposal to
be accepted during period of synchrony, and liveness is guaranteed.
The cost is quadratic communication per leader replacement.

% *----*----*----*-----
\subsection{Tendermint and Casper}\label{sec:comparison-tendermint-casper}

Tendermint has a Two-Chain commit rule identical to PBFT,
and Casper has a Two-Chain rule in which the leaf does not need to
have a \QC\ to direct parent.  That is, in Casper,
Figure~\ref{fig:comparison}(c,d) depicts the commit rules for Tendermint and
Casper, respectively.

In both methods, a leader simply sends the highest One-Chain it knows along with its
proposal.
A replica unlocks a One-Chain if it receives from the leader a higher one.

However, because correct replicas may not vote for a leader's node,
to guarantee progress a new leader must obtain the highest
One-Chain by waiting the maximal network delay.
Otherwise, if leaders only wait for the first $n-f$ messages to start a new
height, there is no progress guarantee.
Leader delays are inherent both in Tendermint and in Casper, in order to provide
liveness.

This simple leader protocol embodies a linear leap in the communication complexity of
the leader protocol, which \HotStuff{} borrows from.
As already mentioned above, a \QC{} could be captured in a single value using
threshold-signatures, hence a leader can collect and disseminate the highest
One-Chain with linear communication complexity.
However, crucially, due to the extra \QC{} step, \HotStuff{} does not require
the leader to wait the maximal network delay.

\section{Evaluation}\label{app:eval}

We have implemented \HotStuff as a library in roughly 4K
lines of C++ code. Most noticeably, the core consensus logic specified
in the pseudocode consumes only around 200 lines.
%Thanks to the notion of a \Gadget, we decouple safety and liveness handling entirely, simplifying each.
In this section, we will first examine baseline
throughput and latency by comparing to a state-of-art system,
BFT-SMaRt~\cite{BFTSMaRt}.  We then focus on the message cost for view
changes to see our advantages in this scenario.
 
\subsection{Setup}
We conducted our experiments on Amazon EC2 using \texttt{c5.4xlarge}
instances.  Each instance had 16 vCPUs supported by Intel Xeon
Platinum 8000 processors.  All cores sustained a Turbo CPU clock speed
up to 3.4GHz. We ran each replica on a single VM instance, and so
BFT-SMaRt, which makes heavy use of threads, was allowed to utilize 16
cores per replica, as in their original evaluation~\cite{BFTSMaRt}.
The maximum TCP bandwidth measured by \texttt{iperf} was around 1.2 Gigabytes
per second. We did not throttle the bandwidth in any run. The network latency
between two machines was less than 1~ms.

Our prototype implementation of \HotStuff uses secp256k1 for all
digital signatures in both votes and quorum certificates. BFT-SMaRt
uses hmac-sha1 for MACs (Message Authentication Codes) in the messages
during normal operation and uses digital signatures in addition to
MACs during a view change.
 
All results for \HotStuff reflect end-to-end measurement from the
clients. For BFT-SMaRt, we used the micro-benchmark programs
\texttt{Throughput\-Latency\-Server} and
\texttt{Throughput\-Latency\-Client} from the BFT-SMaRt website
(\url{https://github.com/bft-smart/library}). The client program
measures end-to-end latency but not throughput, while the server-side
program measures both throughput and latency. We used the throughput
results from servers and the latency results from clients.
 
\subsection{Base Performance}
We first measured throughput and latency in a setting commonly seen in
the evaluation of other BFT replication systems. We ran 4 replicas in
a configuration that tolerates a single failure, i.e., $f = 1$, while
varying the operation request rate until the system saturated.  This
benchmark used empty (zero-sized) operation requests and responses and
triggered no view changes; we expand to other settings below.
Although our responsive \HotStuff is three-phase, we also run its two-phase
variant as an additional baseline, because the BFT-SMaRt baseline has only two phases.

\begin{center}
    \begin{minipage}{0.47\linewidth}
        \includegraphics[width=\linewidth]{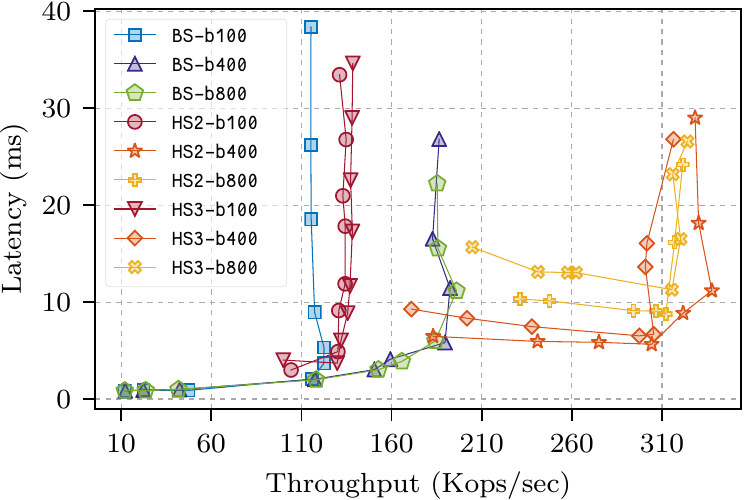}
        \captionof{figure}{Throughput vs.\@ latency with different choices of batch size, 4 replicas, 0/0 payload.}\label{fig:thr-lat1}
    \end{minipage}
    \hfill
    \begin{minipage}{0.47\linewidth}
        \includegraphics[width=\linewidth]{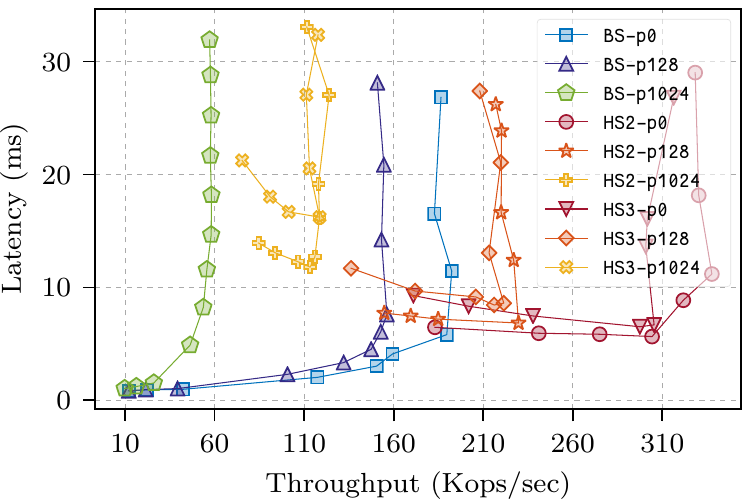}
        \captionof{figure}{Throughput vs.\@ latency with different choices of payload size, 4 replicas, batch size of 400.}\label{fig:thr-lat2}
    \end{minipage}
\end{center}

Figure~\ref{fig:thr-lat1} depicts three batch sizes for both systems,
100, 400, and 800, though because these systems have different
batching schemes, these numbers mean slightly different things for
each system.  BFT-SMaRt drives a separate consensus decision for each
operation, and batches the messages from multiple consensus protocols.
Therefore, it has a typical L-shaped latency/throughput performance
curve.  \HotStuff batches multiple operations in each node, and in
this way, mitigates the cost of digital signatures per
decision. However, above $400$ operations per batch, the latency
incurred by batching becomes higher than the cost of the crypto.  Despite these
differences, both three-phase (``\verb|HS3-|'') and two-phase (``\verb|HS2-|'')
\HotStuff achieves comparable latency performance to BFT-SMaRt (``\verb|BS-|'')
for all three batch sizes, while their maximum throughput noticeably
outperformed BFT-SMaRt.

For batch sizes of 100 and 400, the lowest-latency \HotStuff point
provides latency and throughput that are better than the latency and
throughput simultaneously achievable by BFT-SMaRT at its highest
throughput, while incurring a small increase in latency.  
This increase is partly due to the batching strategy employed by \HotStuff: It
needs three additional full batches (two in the two-phase variant) to arrive at a decision on a batch.
Our experiments kept the number of outstanding requests high, but the higher the
batch size, the longer it takes to fill the batching pipeline.
Practical deployments could be further optimized to adapt the
batch size to the number of outstanding operations.  

Figure~\ref{fig:thr-lat2} depicts three client request/reply payload
sizes (in bytes) of 0/0, 128/128, and 1024/1024, denoted
``\verb|p0|'', ``\verb|p128|'', and ``\verb|p1024|'' respectively.  At all payload sizes,
both three-phase and two-phase \HotStuff outperformed BFT-SMaRt in throughput,
with similar or comparable latency.

Notice BFT-SMaRt uses MACs based on symmetric crypto that is orders of
magnitude faster than the asymmetric crypto in digital signatures used by
\HotStuff, and also three-phase HotStuff has more round trips compared to
two-phase PBFT variant used by BFT-SMaRt. Yet \HotStuff is still able to
achieve comparable latency and much higher throughput.
Below we evaluate both systems in more challenging situations, where the
performance advantages of \HotStuff will become more pronounced.

\subsection{Scalability}
To evaluate the scalability of \HotStuff in various dimensions, we
performed three experiments. For the baseline, we used zero-size
request/response payloads while varying the number of replicas.  The
second evaluation repeated the baseline experiment with 128-byte and 1024-byte
request/response payloads.  The third test repeated the baseline (with
empty payloads) while introducing network delays between replicas
that were uniformly distributed in 5ms $\pm$ 0.5ms or in 10ms $\pm$
1.0ms, implemented using \verb|NetEm| (see
\url{https://www.linux.org/docs/man8/tc-netem.html}).
For each data point, we repeated five runs with the same setting and show error
bars to indicate the standard deviation for all runs.

\begin{figure}[b]
    \begin{subfigure}{0.47\linewidth}
        \includegraphics[width=\linewidth]{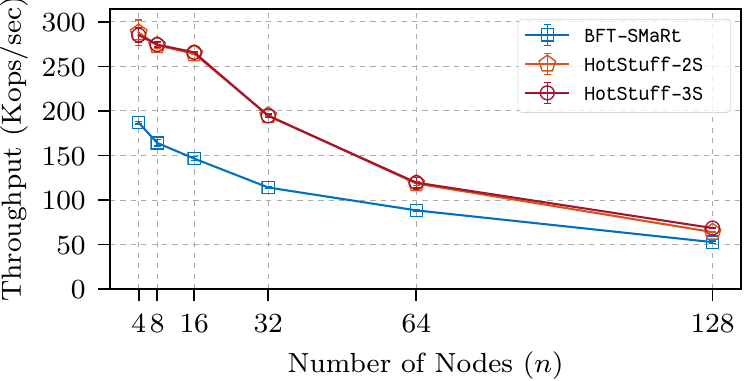}
        \caption{Throughput}\label{fig:n-thr-0}
    \end{subfigure}
    \hfill
    \begin{subfigure}{0.47\linewidth}
        \includegraphics[width=\linewidth]{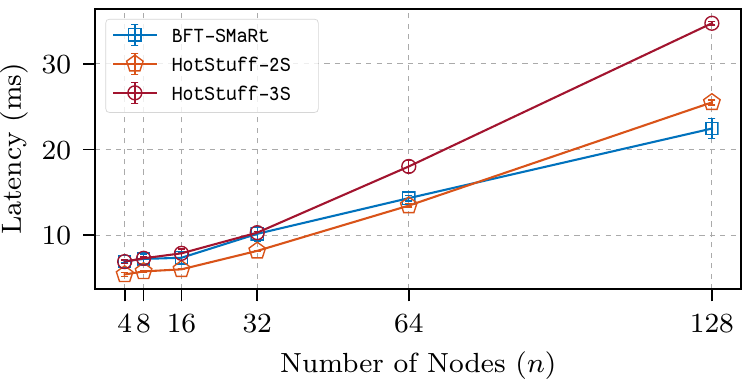}
        \caption{Latency}\label{fig:n-lat-0}
    \end{subfigure}
    \caption{Scalability with 0/0 payload, batch size of 400.}
\end{figure}

The first setting is depicted in
Figure~\ref{fig:n-thr-0} (throughput) and Figure~\ref{fig:n-lat-0}
(latency). Both three-phase and two-phase \HotStuff show consistently better
throughput than BFT-SMaRt, while their latencies are still comparable to BFT-SMaRt
with graceful degradation. The performance scales better than
BFT-SMaRt when $n < 32$. This is because we currently still use a list of
secp256k1 signatures for a QC\@. In the future, we plan to reduce the
cryptographic computation overhead in \HotStuff by using a fast threshold signature scheme.

The second setting with payload size 128 or 1024 bytes is denoted by
``\verb|p128|'' or ``\verb|p1024|'' in Figure~\ref{fig:n-thr-1} (throughput)
and Figure~\ref{fig:n-lat-1} (latency).  Due to its quadratic bandwidth cost,
the throughput of BFT-SMaRt scales worse than \HotStuff for reasonably large (1024-byte) payload size.

The third setting is shown in Figure~\ref{fig:n-thr-2} (throughput)
and Figure~\ref{fig:n-lat-2} (latency) as ``\verb|5ms|'' or
``\verb|10ms|''.  Again, due to the larger use of communication in
BFT-SMaRt, \HotStuff consistently outperformed BFT-SMaRt in both
cases.

%\DM{shall we say: These preliminary micro-benchmarks, especially the third
%settings, indicate that \HotStuff will perform well under real workloads.  In the
%future, we plan to repeat the evaluation over wide-area settings.}

\begin{figure}
    \begin{subfigure}{0.47\linewidth}
        \includegraphics[width=\linewidth]{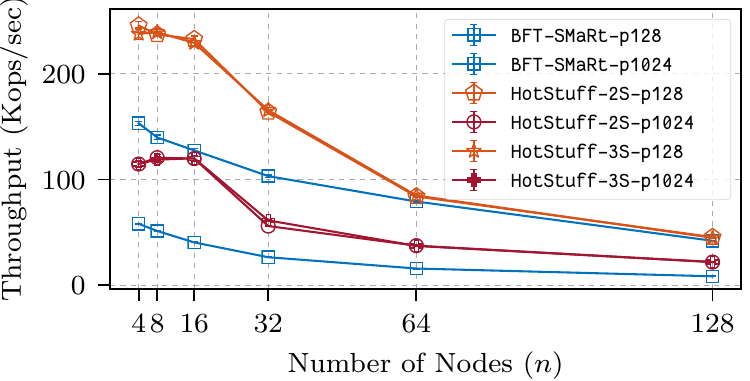}
        \caption{Throughput}\label{fig:n-thr-1}
    \end{subfigure}
    \hfill
    \begin{subfigure}{0.47\linewidth}
        \includegraphics[width=\linewidth]{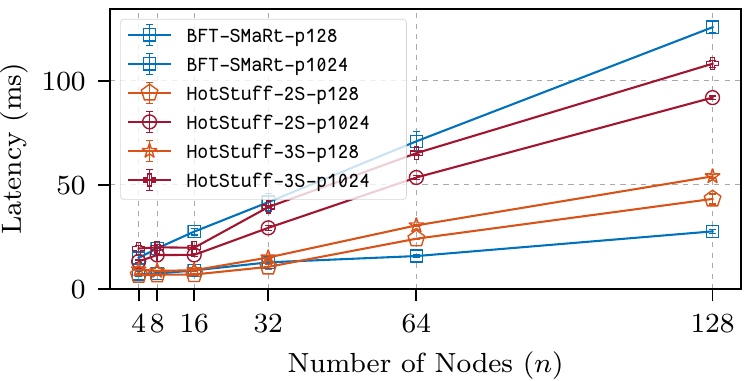}
        \caption{Latency}\label{fig:n-lat-1}
    \end{subfigure}
    \caption{Scalability for 128/128 payload or 1024/1024 payload, with batch size of 400.}
\end{figure}

\begin{figure}
    \begin{subfigure}{0.47\linewidth}
        \includegraphics[width=\linewidth]{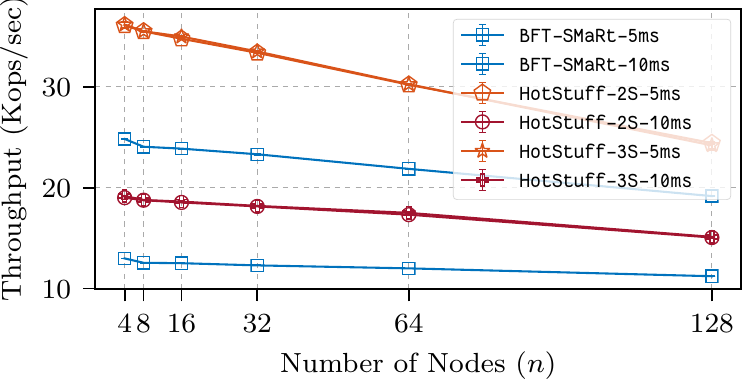}
        \caption{Throughput}\label{fig:n-thr-2}
    \end{subfigure}
    \hfill
    \begin{subfigure}{0.47\linewidth}
        \includegraphics[width=\linewidth]{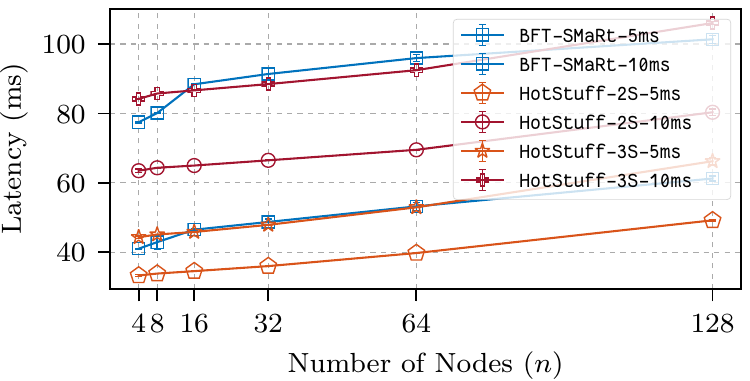}
        \caption{Latency}\label{fig:n-lat-2}
    \end{subfigure}
    \caption{Scalability for inter-replica latency 5ms $\pm$ 0.5ms or 10ms $\pm$ 1.0ms, with 0/0 payload, batch size of 400.}
\end{figure}

\subsection{View Change}

To evaluate the communication complexity of
leader replacement, we counted the number of \emph{MAC or signature
verifications} performed within BFT-SMaRt's view-change protocol. 
Our evaluation strategy was as follows. We injected a view change into BFT-SMaRt every one
thousand decisions. We instrumented the BFT-SMaRt source code to count the number of verifications
upon receiving and processing messages within the view-change protocol. 
Beyond communication complexity, 
this measurement underscores the cryptographic computation load
associated with transferring these authenticated values. 

Figure~\ref{fig:n-nauth} and Figure~\ref{fig:n-nauth-sig} show the number of
extra authenticators (MACs and signatures, respectively) processed for each view change,
where ``extra'' is defined to be those authenticators that would not be sent if the leader remained stable.  Note that \HotStuff has no ``extra'' authenticators by this definition, since the number of authenticators remains the same regardless of whether the leader stays the same or not.
The two figures show that BFT-SMaRt uses cubic numbers of MACs and quadratic
numbers of signatures. \HotStuff does not require extra authenticators for
view changes and so is omitted from the graph.

Evaluating the real-time performance of leader replacement is tricky. 
First, BFT-SMaRt got stuck when triggering frequent view changes; our
authenticator-counting benchmark had to average over as many successful
view changes as possible before the system got stuck, repeating the experiment
many times. Second, the actual elapsed time for leader replacement depends
highly on timeout parameters
%, on the \Gadget, and on
and the leader-election
mechanism. It is therefore impossible to provide a meaningful comparison.

\begin{figure}
    \begin{subfigure}{0.49\textwidth}
    \includegraphics[width=\linewidth]{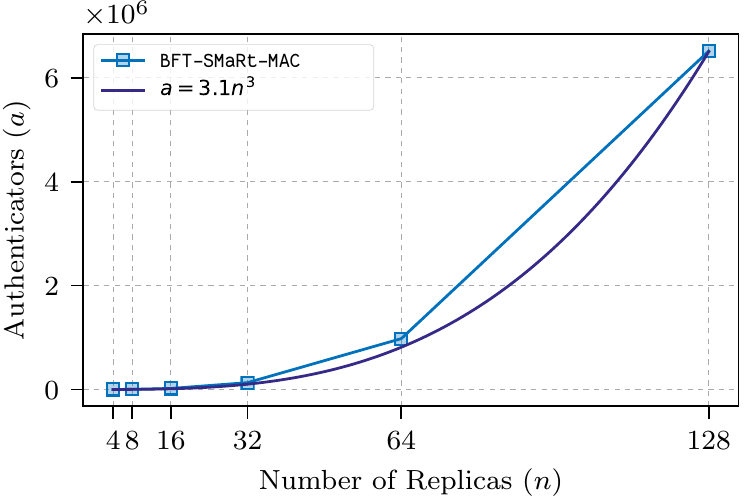}
    \caption{MACs}\label{fig:n-nauth}
    \end{subfigure}
    \begin{subfigure}{0.49\textwidth}
    \includegraphics[width=\linewidth]{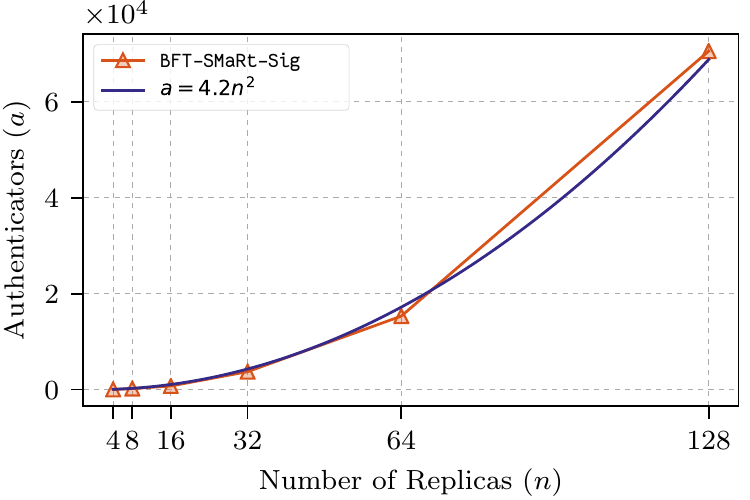}
    \caption{Signatures}\label{fig:n-nauth-sig}
    \end{subfigure}
    \captionof{figure}{Number of extra authenticators used for each BFT-SMaRt view change.}
\end{figure}

\section{Conclusion}

Since the introduction of PBFT, the first practical BFT replication solution in
the partial synchrony model,
numerous BFT solutions were built around its core two-phase paradigm. 
The first
phase guarantees proposal uniqueness through a \QC. The second phase guarantees
that a new leader can convince replicas to vote for a safe proposal. This
requires the leader to relay information from $(n-f)$ replicas, each reporting
its own highest \QC\ or vote. Generations of two-phase works thus suffer from a
quadratic communication bottleneck on leader replacement.

\HotStuff\ revolves around a three-phase core, allowing a new leader to simply
pick the highest \QC\ it knows of. This alleviates the above complexity
and at the same time considerably simplifies the leader replacement protocol.
Having (almost) canonized the phases, it is very easy to pipeline \HotStuff, and
to frequently rotate leaders.

\section*{Acknowledgements}
We are thankful to
Mathieu Baudet,
Avery Ching,
George Danezis,
Fran\c{c}ois Garillot,
Zekun Li,
Ben Maurer,
Kartik Nayak,
Dmitri Perelman,
and Ling Ren,
for many deep discussions of HotStuff,
and to Mathieu Baudet for exposing a subtle error in a previous
version posted to the ArXiv of this manuscript.

\begingroup
\bibliographystyle{plain}

\endgroup
\begin{appendices}
\section{Proof of Safety for \HotStuffPro}\label{app:hs-proof}
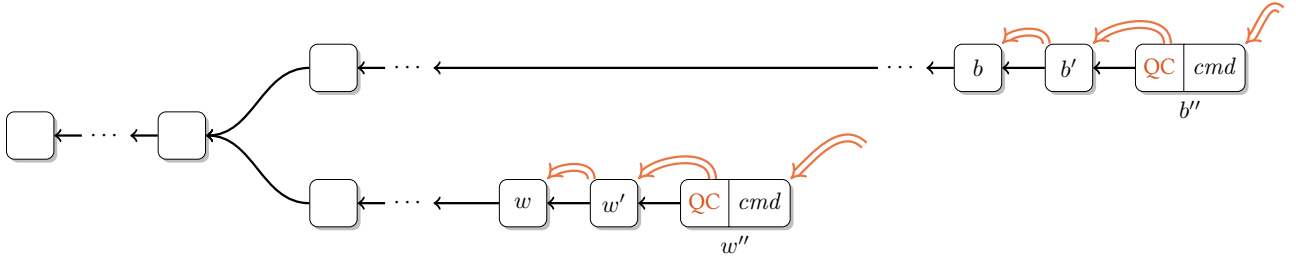
\begin{figure*}
\centering
\begin{tikzpicture}[x=1.12cm, scale=0.9, every node/.append style={transform shape}]
    \tikzstyle{block}=[drop shadow={shadow xshift=0.3ex,shadow yshift=-0.3ex}, rounded corners=0.7ex]
    \tikzstyle{data}=[rectangle split,rectangle split horizontal,rectangle split parts=2,rectangle split part align=base,draw,text centered,block]
    \definecolor{darkorange}{HTML}{d95319}
    \definecolor{orange}{HTML}{eb7645}
    \begin{scope}[all/.style={draw, minimum height=0.7cm, minimum width=0.7cm},line width=0.08ex]
        %\node[all, draw,fill=white] (b0) at (0, 0) {};
        \node[all, block, draw,fill=white] (b1) at (1, 0) {}; % B_0
        \node[all, block, draw,fill=white] (b3) at (3, 0) {};
        \node[all, block, draw,fill=white] (b4) at (5, -1) {};
        \node[all, block, draw,fill=white] (b5) at (5, 1) {};
        \node[all, draw=none, fill=none] (b2) at (2, 0) {$\cdots$};
        \node[all, draw=none,fill=none] (c5) at (6, -1) {$\cdots$};
        \node[all, draw=none,fill=none] (c6) at (6, 1) {$\cdots$};
        \node[all, draw=none,fill=none] (c7) at (12.5, 1.5) {};
        \node[all, draw=none,fill=none] (c8) at (12, -0.5) {};
        \node[all, draw=none,fill=none] (c9) at (17.5, 1.5) {};
        %\node[all, draw,fill=white] (c61) at (11.5, 1) {$B_s$};
        \node[all, draw=none,fill=none] (c62) at (12.5, 1) {$\cdots$};
        \node[all, block, draw,fill=white] (b6a) at (7.5, -1) {$w$};
        \node[all, block, draw,fill=white] (b6) at (8.7, -1) {$w'$};
        \node[all, block, draw,fill=white] (b7a) at (13.5, 1) {$b$};
        \node[all, block, draw,fill=white] (b7) at (14.7, 1) {$b'$};
        \node[all, data, draw,fill=white] (b8) at (10.3, -1) {{\color{darkorange}\QC}\nodepart{two}$\cmd$};
        \node[all, data, draw,fill=white] (b9) at (16.3, 1) {{\color{darkorange}\QC}\nodepart{two}$\cmd$};
        \node[anchor=north] (b8l) at (b8.south) {$w''$};
        \node[anchor=north] (b9l) at (b9.south) {$b''$};
        \begin{scope}[line width=0.2ex]
        %\path[<-] (b0) edge[out=0, in=180] node[sloped,above] {} (b1) ;
        \path[<-] (b1) edge[out=0, in=180] node[sloped,above] {} (b2) ;
        \path[<-] (b2) edge[out=0, in=180] node[sloped,above] {} (b3) ;
        \path[<-] (b3) edge[out=0, in=180] node[sloped,above] {} (b4) ;
        \path[<-] (b3) edge[out=0, in=180] node[sloped,above] {} (b5) ;
        \path[<-] (b5) edge[out=0, in=180] node[sloped,above] {} (c6) ;
        \path[<-] (b4) edge[out=0, in=180] node[sloped,above] {} (c5) ;
        \path[<-] (c5) edge[out=0, in=180] node[sloped,above] {} (b6a) ;
        \path[<-] (b6a) edge[out=0, in=180] node[sloped,above] {} (b6) ;
        %\path[<-] (c6) edge[out=0, in=180] node[sloped,above] {} (c61) ;
        \path[<-] (c6) edge[out=0, in=180] node[sloped,above] {} (c62) ;
        \path[<-] (b7a) edge[out=0, in=180] node[sloped,above] {} (b7) ;
        \path[<-] (c62) edge[out=0, in=180] node[sloped,above] {} (b7a) ;
        \path[<-] (b6) edge[out=0, in=180] node[sloped,above] {} (b8) ;
        \path[<-] (b7) edge[out=0, in=180] node[sloped,above] {} (b9) ;
        \end{scope}
        \begin{scope}[color=orange,line width=0.2ex]
            \path[-implies] (b8.north) ++ (-0.3, 0) edge[double distance=0.4ex, out=90, in=north east] node[sloped,above] {} (b6);
            \path[-implies] (b6.north) ++ (-0.3, 0) edge[double distance=0.4ex, out=90, in=north east] node[sloped,above] {} (b6a);
            \path[-implies] (b9.north) ++ (-0.3, 0) edge[double distance=0.4ex, out=90, in=north east] node[sloped,above] {} (b7);
            \path[-implies] (b7.north) ++ (-0.3, 0) edge[double distance=0.4ex, out=90, in=north east] node[sloped,above] {} (b7a);
            %\path[-implies] (c7.north) edge[double distance=0.4ex, in=north east, out=135] node[sloped,above] {} (c61);
            \path[-implies] (c8.north) edge[double distance=0.4ex, in=north east, out=135] node[sloped,above] {} (b8);
            \path[-implies] (c9.north) edge[double distance=0.4ex, in=north east, out=135] node[sloped,above] {} (b9);
        \end{scope}
    \end{scope}
\end{tikzpicture}
\captionof{figure}{$w$ and $b$ both getting committed (impossible). Nodes horizontally arranged by view numbers.}
\label{fig:proof}
\end{figure*}
\begin{theorem}\label{thm:hotstuff-safety}
    Let $b$ and $w$ be two conflicting nodes. Then they cannot both become committed, each by an honest replica.
\end{theorem}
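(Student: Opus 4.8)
The plan is to argue by contradiction, with quorum intersection doing all the heavy lifting. Two structural facts are the engine. First, since every \QC collects votes from a quorum of $2f+1$ out of the $n=3f+1$ replicas, any two quorums overlap in at least $f+1$ replicas and hence in at least one honest replica. Second, an honest replica casts at most one vote per view, so at most one node can be certified in any given view: two conflicting certified nodes sharing a view would, by the overlap, force an honest replica to vote twice. The third ingredient is the honest voting rule recalled earlier: a replica votes for a proposal only if the proposal's justifying \QC either extends its current locked node $\fBranch{\lqc}$ or carries a view strictly larger than $\fViewNumber{\lqc}$.

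Suppose for contradiction that $b$ and $w$ are conflicting and both committed by honest replicas. Committing $b$ means some honest replica observed a three-chain $b \qcedge b' \qcedge b''$, where $b'$ and $b''$ are the successive children of $b$ and each node carries a \QC for its predecessor; let $v_1 < v_2 < v_3$ be the views in which the certificates on $b, b', b''$ were formed. Because $b$ is committed, the certificate on $b''$ exists, so a quorum $R$ of $2f+1$ replicas voted for $b''$ in view $v_3$; processing $b''$ exposes the two-chain $b \qcedge b'$ and drives every replica in $R$ to set $\lqc$ to the certificate on $b$, that is, to become locked on $b$ (a node trivially extending $b$) at view $v_1$.

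The heart of the argument is the claim, proved by strong induction on the view $v'$, that every \QC formed in a view $v' \ge v_1$ certifies a node that extends $b$. For $v' \in \{v_1, v_2, v_3\}$ this is immediate from the single-node-per-view fact, the unique certified nodes there being $b, b', b''$. For a general $v'$ I intersect the quorum behind the view-$v'$ certificate with $R$; the common honest replica voted for that node while locked on a $b$-extending node at view $\ge v_1$, so the voting rule leaves two possibilities for the node's justifying \QC, say $q$. Either $q$ extends the locked node, hence extends $b$, forcing the certified node to extend $b$; or $\fViewNumber{q}$ exceeds $\fViewNumber{\lqc} \ge v_1$, placing $q$ at a view in $[v_1, v'-1]$ so the induction hypothesis applies and again the certified node extends $b$. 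Finally, since $w$ is committed its certificate exists at some view; relabelling $b,w$ so that $b$'s certificate has the smaller view, equality contradicts single-node-per-view, while the strict case contradicts the claim, because $w$ would then have to extend $b$ despite conflicting with it.

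I expect the inductive step to be the crux, for two reasons. The ``higher view'' branch of the voting rule is the dangerous one: one must see that the justifying \QC $q$ it produces already lives in the range covered by the induction hypothesis and therefore cannot escape $b$'s branch. More delicate still is making the intersection with $R$ airtight when $v'$ lies between $v_1$ and $v_3$, since a replica in $R$ acquires its lock on $b$ only when it votes in view $v_3$; for such intermediate views I must verify that its lock is nonetheless anchored in $b$'s branch at a view $\ge v_1$ throughout, which is exactly where the monotonicity of $\lqc$ and the induction hypothesis have to be threaded together carefully.
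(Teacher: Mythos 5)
Your overall architecture — quorum intersection plus ``one certified node per view,'' then a strong induction showing that every \QC formed at a sufficiently high view certifies a descendant of $b$ — is a genuinely different route from the paper's, which instead takes a minimal counterexample: it defines the \emph{first switching point} $\qc_s$, the lowest-view valid \QC above $v_{w''}=\fViewNumber{\fJustify{w^{*}}}$ whose node conflicts with $w$, intersects its quorum with that of $\fJustify{w^{*}}$, and shows the common honest replica's \textsc{safeNode} check would have rejected $\fBranch{\qc_s}$. Your inductive step for $v'>v_3$ is sound, and so are the base cases at $v_1,v_2,v_3$. But the point you flag as ``delicate'' is a real gap, and the repair you sketch cannot work. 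For a view $v'$ strictly between $v_1$ and $v_3$ (other than $v_2$), the honest replica you obtain by intersecting with $R$ has not yet voted for $b''$ and therefore holds no lock anchored in $b$'s branch — members of $R$ acquire that lock only upon processing $b''$ in view $v_3$. No amount of threading $\lqc$-monotonicity through the induction hypothesis rescues this, because the statement you are trying to prove is simply false at such views: a node conflicting with $b$ can legitimately be certified at a view in $(v_1,v_3)$ without preventing $b$'s later commit. Your closing step also quietly depends on this, since after relabelling, $w$'s certificate could in principle sit at exactly such an intermediate view.

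The missing ingredient is the structure of the commit rule itself: the Three-Chain $b \qcedge b' \qcedge b''$ must be formed by direct parents in consecutive views, so $v_2=v_1+1$ and $v_3=v_1+2$ and the problematic intermediate views do not exist — your base cases then cover all of $[v_1,v_3]$ and the lock-based induction takes over at $v'>v_3$. This is not a technicality you may elide: the paper's ``Why direct parent?'' remark gives a counterexample showing that without this constraint the theorem fails outright. The paper needs the same fact, but only once, to order the two chains so that w.l.o.g.\ $\fViewNumber{\fJustify{b'}} > \fViewNumber{\fJustify{w^{*}}}$; by anchoring its argument at the view where the lock is \emph{acquired} (the certification of $w''$) rather than at the view where $w$ itself is certified, it never has to reason about views inside the committed chain at all. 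Make the consecutive-view property explicit and discharge the intermediate range with it, and your proof closes.
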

\begin{proof}
    We prove this theorem by contradiction.
    By an argument similar to Lemma~\ref{lm:basic}, $b$ and $w$ must be in different views.
    Assume during an exectuion $b$ becomes committed at some honest replica via the QC Three-Chain $b, b', b'', b^{*}$,
    likewise, $w$ becomes committed at some honest replica via the QC Three-Chain $w, w', w'', w^{*}$.
    Since each of $b, b', b'', w, w', w''$ get its QC, then w.l.o.g., we assume $b$ is created in a view higher than $w''$, namely, $\fViewNumber{\fJustify{b'}} > \fViewNumber{\fJustify{w^{*}}}$, as shown in Figure~\ref{fig:proof}.

    We now denote by $v_s$ the lowest view higher than $v_{w''} = \fViewNumber{\fJustify{w^{*}}}$ in which there is a $\qc_s$ such that $\fBranch{\qc_s}$ conflicts with $w$. Let $v_b = \fViewNumber{\fJustify{b'}}$.
    Formally, we define the following predicate for any $\qc$:

    \begin{align*}
        E(\qc) \coloneqq & (v_{w''} < \fViewNumber{\qc} \le v_b)
        \land (\fBranch{\qc}\textrm{ conflicts with }w).
    \end{align*}
    We can now set the \emph{first} switching point $\qc_s$:
    \[ \qc_s \coloneqq \argmin_{\qc}\{\fViewNumber{\qc} \mid \qc\textrm{ is valid} \land E(\qc)\}. \]

    By assumption, such $\qc_s$ exists, for example, $\qc_s$ could be $\fJustify{b'}$.
    %Let $\qc_{w^{*}}$ denote $$.
    %Let $c_b$ denote $\fJustify{b^{*}}$.
    Let $r$ denote a correct replica in the intersection of $\fJustify{w^{*}}$ and $\qc_s$.
    %During view $v_{w''}$, replica $r$ updates its lock $\lqc$ to the QC on $w$ at Line~\ref{line:hs-update-lock2} in Algorithm~\ref{alg:hotstuff-pipelined}.
    By assumption on the minimality of $\qc_s$, the lock that $r$ has on $w$ is not changed before $\qc_s$ is formed.
    Now consider the invocation of $\Call{safeNode}{}$ in view $v_s$ by $r$, with
    a message $m$ carrying a conflicting node $\msgBranch{m} = \fBranch{\qc_s}$.
    By assumption, the condition on the lock (Line~\ref{line:safeNode:conflict} in Algorithm~\ref{alg:util}) is false.
    On the other hand, the protocol requires $t = \fBranch{\fJustify{\msgBranch{m}}}$ to be an ancestor of $\fBranch{\qc_s}$. By minimality of $\qc_s$, $\fViewNumber{\fJustify{\msgBranch{m}}} \le v_{w''}$. Since $\fBranch{\qc_s}$ conflicts with $w$, $t$ cannot be $w, w'$ or $w''$. Then, $\fViewNumber{\fJustify{\msgBranch{m}}} < \fViewNumber{\fJustify{w'}}$, so the other half of the disjunct is also false. Therefore, $r$ will not vote for $\fBranch{\qc_s}$, contradicting the assumption of $r$.

\end{proof}
The liveness argument is almost identical to \HotStuffBasic, except that we
have to assume after GST, two consecutive leaders are correct, to guarantee a
decision. It is omitted for brevity.

\section{Proof of Safety for Implementation Pseudocode}
\begin{lemma}\label{lm:incumbent-basic}
    Let $\Lvl$ and $\lvl$ be two conflicting nodes such that
    $\fHeight{\Lvl} = \fHeight{\lvl}$, then they cannot both have valid quorum certificates.
\end{lemma}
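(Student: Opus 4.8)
The plan is to derive a contradiction from the standard quorum-intersection argument combined with the monotonicity of each honest replica's voting height. First I would recall from the model that $n = 3f + 1$ and that a valid quorum certificate for a node is formed from the votes of $n - f = 2f + 1$ distinct replicas. So I assume, toward a contradiction, that both $\Lvl$ and $\lvl$ carry valid quorum certificates, and I let $S_{\Lvl}$ and $S_{\lvl}$ denote the sets of replicas whose votes constitute these two certificates; each set has size at least $2f+1$.

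Next I would invoke quorum intersection. Since $|S_{\Lvl}| + |S_{\lvl}| \ge 2(2f+1) = 4f+2 > 3f+1 = n$, the two sets must overlap, and in fact $|S_{\Lvl} \cap S_{\lvl}| \ge (2f+1)+(2f+1)-(3f+1) = f+1$. Because at most $f$ replicas are Byzantine, at least one replica $r \in S_{\Lvl} \cap S_{\lvl}$ is honest. Hence $r$ voted both for $\Lvl$ and for $\lvl$.

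The crux is then to show that a single honest replica never votes for two distinct nodes at the same height. Here I would appeal to the voting rule of the implementation pseudocode: an honest replica casts a vote for a node only when that node's height strictly exceeds the replica's current $\vheight$, and it updates $\vheight$ to that height immediately after voting. Since $\vheight$ is monotonically nondecreasing and guards every vote, once $r$ has voted at some height $h$ it can never again vote for a node of height at most $h$; in particular it votes at most once at height $\fHeight{\Lvl} = \fHeight{\lvl}$. But $\Lvl$ and $\lvl$ are conflicting and therefore distinct, so $r$ would have had to vote for two different nodes at this common height, a contradiction.

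I expect the only delicate point to be stating the voting-height invariant precisely enough, namely that the guard $\fHeight{\cdot} > \vheight$ and the subsequent update of $\vheight$ are performed together by each honest replica, so that no interleaving lets $r$ record two votes at the same height; everything else reduces to the textbook quorum-intersection step. (This lemma is the height-level analogue of the view-level statement used in Theorem~\ref{thm:hotstuff-safety}.)
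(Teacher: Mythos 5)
Your proof is correct and follows essentially the same route as the paper's: both QCs of size $2f+1$ must share an honest replica (the paper counts $f+1$ honest voters in each quorum and intersects those; you intersect the full quorums and then pick an honest member — the same standard argument), and that replica cannot have voted twice at the same height. Your elaboration of the $\vheight$ guard just makes explicit the voting-rule fact that the paper's proof leaves implicit in the phrase ``which is impossible because $\Lvl$ and $\lvl$ are of the same height.''
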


\begin{proof}
    Suppose they can, so both $\Lvl$ and $\lvl$ receive $2f + 1$ votes, among which there
    are at least $f + 1$ honest replicas voting for each node,
    then there must be an honest replica that votes for both,
    which is impossible because $\Lvl$ and $\lvl$ are of the same height.
\end{proof}

\begin{notation}
    For any node $b$, let ``$\gets$'' denote parent relation, i.e. $\fParent{b} \gets b$. Let ``$\pathfrom$'' denote ancestry, that is, the reflexive transitive closure of the parent relation. Then two nodes $b, w$ are conflicting iff. $b \nopathfrom w \land w \nopathfrom b$. Let ``$\qcedge$'' denote the node a \QC{} refers to, i.e. $\fBranch{\fJustify{b}} \qcedge b$.
\end{notation}

\begin{lemma}\label{lm:incumbent-safety}
    Let $\Lvl$ and $\lvl$ be two conflicting nodes. Then they cannot both become 
committed, each by an honest replica.
\end{lemma}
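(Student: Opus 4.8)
The plan is to replay the proof of Theorem~\ref{thm:hotstuff-safety}, now phrased in the parent/ancestry/\QC-edge notation just introduced. Assume for contradiction that $\Lvl$ is committed at some honest replica via a three-chain $\Lvl \qcedge \Lvlxx \qcedge \Lvlxxx \qcedge \Lvl^{*}$ (with the consecutive direct-parent links that the commit rule requires) and, symmetrically, that $\lvl$ is committed via $\lvl \qcedge \lvlyy \qcedge \lvlyyy \qcedge \lvl^{*}$. The role that Lemma~\ref{lm:basic} plays in the abstract proof is here played by Lemma~\ref{lm:incumbent-basic}: since two conflicting nodes of equal height cannot both carry valid quorum certificates, the two three-chains cannot align at a common height, so in particular the two committed chains lie in different views. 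After fixing the direction of the resulting inequality I would assume w.l.o.g. that $\fViewNumber{\fJustify{\Lvlxx}} > \fViewNumber{\fJustify{\lvl^{*}}}$.

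Next I would introduce the same switching-point certificate. For a valid $\qc$ define the predicate asserting $\fViewNumber{\fJustify{\lvl^{*}}} < \fViewNumber{\qc} \le \fViewNumber{\fJustify{\Lvlxx}}$ together with $\fBranch{\qc}$ conflicting with $\lvl$, and let $\qc_s$ be a certificate of minimum view number satisfying it; such a $\qc_s$ exists since $\fJustify{\Lvlxx}$ itself qualifies. By quorum intersection some honest replica $r$ contributes both to $\fJustify{\lvl^{*}}$ (which is what causes $r$ to lock on $\lvl$'s chain) and to $\qc_s$. Minimality of $\qc_s$ guarantees that between these two events $r$ never adopts a locked certificate whose node conflicts with $\lvl$, so $r$'s lock still refers to $\lvl$ when $\qc_s$ is formed.

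The heart of the argument is the analysis of the invocation of $\Call{safeNode}{}$ by $r$ on the message $m$ with $\msgBranch{m} = \fBranch{\qc_s}$ that $r$ must have accepted in order to contribute to $\qc_s$. The lock clause is false because $\fBranch{\qc_s}$ conflicts with $r$'s locked node $\lvl$. For the remaining disjunct I would argue, exactly as in the abstract proof, that minimality forces $\fViewNumber{\fJustify{\msgBranch{m}}} \le \fViewNumber{\fJustify{\lvl^{*}}}$, and that since $\fBranch{\qc_s}$ conflicts with $\lvl$ the ancestor $\fBranch{\fJustify{\msgBranch{m}}}$ cannot be $\lvl$, $\lvlyy$, or $\lvlyyy$; hence $\fViewNumber{\fJustify{\msgBranch{m}}} < \fViewNumber{\fJustify{\lvlyy}}$ and the disjunct is false too. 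With both clauses false, $r$ could not have voted for $\fBranch{\qc_s}$, contradicting $r \in \qc_s$.

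The main obstacle is the bookkeeping of the last paragraph: faithfully translating the two-disjunct $\Call{safeNode}{}$ predicate into the height/\QC-edge notation and checking that the minimality of $\qc_s$ simultaneously defeats both disjuncts. The delicate point is justifying that $r$'s lock still points to $\lvl$'s chain (and not to some intermediate conflicting node left unexcluded) at the instant $\qc_s$ forms; this is exactly where Lemma~\ref{lm:incumbent-basic} and the minimality of $\qc_s$ must be combined with care.
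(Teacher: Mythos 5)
Your proposal follows the paper's proof essentially step for step: the same contradiction setup with the two QC Three-Chains, the same appeal to Lemma~\ref{lm:incumbent-basic} to order them w.l.o.g., the same first switching point $\qc_s$ obtained by minimizing over valid certificates satisfying the conflict predicate, the same quorum-intersection replica $r$, and the same argument that both disjuncts of the voting condition fail for $r$. The only cosmetic difference is that the paper, working at the level of the implementation pseudocode, indexes the minimality of $\qc_s$ by node \emph{height} (concluding $\fHeight{t} \le \fHeight{\lvlyyy}$ and then $\fHeight{t} < \fHeight{\lvly}$) and analyzes \textsc{onReceiveProposal} rather than \textsc{safeNode}, whereas you kept the view-number indexing of the abstract Theorem~\ref{thm:hotstuff-safety}; you flag this translation yourself, and it does not change the substance of the argument.
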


\begin{proof}
    We prove this important lemma by contradiction.
    Let $\Lvlx$ and $\lvly$ be two conflicting nodes at different heights.
    Assume during an execution, $\Lvlx$ becomes committed at some honest
replica via the QC Three-Chain 
$\Lvlx \QCedge \Lvlxx \QCedge \Lvlxxx \qcedge \Lvl^{*}$; 
    likewise, $\lvly$ becomes committed at some honest
replica via the QC Three-Chain 
    $\lvly \QCedge \lvlyy \QCedge \lvlyyy \qcedge \lvl^{*}$. 
By Lemma~\ref{lm:basic}, since each of the nodes $\Lvlx, \Lvlxx, \Lvlxxx,
\lvly, \lvlyy, \lvlyyy$ have QC's, then
    w.l.o.g., we assume $\fHeight{\Lvlx} > \fHeight{\lvlyyy}$, as shown in Figure~\ref{fig:proof}.

%There exists at least one honest replica, say
%    $r$, that has voted for both $\lvlyyy$ and $\Lvlxxx$. 
%Due to the monotonicity of $\vheight$,
%    $r$ must first vote for $\lvlyyy$ and then $\Lvlxxx$.
%    When $r$ votes for $\lvlyyy$, it has a lock on $\lvly$.
%    According to the voting rule at Line~\ref{line:hs-incumbent-conflict}, $r$ could
%    only vote for $\Lvlxxx$ if there exists a QC for a node higher than at least $\lvl$.

We now denote by $\qc_s$ the QC for a node with the lowest height larger than $\fHeight{\lvlyyy}$, that conflicts with $w$.
Formally, we define the following predicate for any $\qc$:
\begin{align*}
    E(\qc) \coloneqq (\fHeight{\lvlyyy} < \fHeight{\fBranch{\qc}} \le \fHeight{\Lvl} )
    \land (\fBranch{\qc}\textrm{ conflicts with }\lvly)
\end{align*}

We can now set the first switching point $\qc_s$:
\[
    \qc_s \coloneqq \argmin_{\qc} \{ \fHeight{\fBranch{\qc}} \mid \qc\textrm{ is valid} \land E(\qc) \}.
\]

    By assumption, such $\qc_s$ exists, for example, $\qc_s$ could be $\fJustify{\Lvlxx}$.
    %Let $\qc_{w^{*}}$ denote $\fJustify{w^{*}}$.
    Let $r$ denote a correct replica in the intersection of $\fJustify{w^{*}}$ and $\qc_s$. By assumption of minimality of $\qc_s$, the lock that $r$ has on $w$ is not changed before $\qc_s$ is formed. Now consider the invocation of \Call{onReceiveProposal}{}, with a message carrying a conflicting node $\Lnew$ such that $\Lnew = \fBranch{\qc_s}$. By assumption, the condition on the lock (Line~\ref{line:hs-incumbent-conflict} in Algorithm~\ref{alg:hotstuff-incumbent-code}) is false. On the other hand, the protocol requires $t = \fBranch{\fJustify{\Lnew}}$ to be an ancestor of $\Lnew$. By minimality of $\qc_s$, $\fHeight{t} \le \fHeight{w''}$. Since $\fBranch{\qc_s}$ conflicts with $w$, $t$ cannot be $w, w'$ or $w''$. Then, $\fHeight{t} < \fHeight{w}$, so the other half of the disjunct is also false. Therefore, $r$ will not vote for $\Lnew$, contradicting the assumption of $r$.
\end{proof}

\begin{theorem}
    Let $\Cmd_1$ and $\Cmd_2$ be any two commands where $\Cmd_1$ is executed before $\Cmd_2$ by some honest replica, then any honest replica that executes $\Cmd_2$ must executes $\Cmd_1$ before $\Cmd_2$.
\end{theorem}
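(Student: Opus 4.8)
The plan is to treat this as a corollary of the safety lemma (Lemma~\ref{lm:incumbent-safety}), translating the operational claim about command execution into a structural claim about committed nodes. Every command is carried by a node, and an honest replica executes a command only when it commits the node containing it; moreover, the execution rule advances a monotone pointer $\Lexec$ along committed nodes and, upon committing a node, runs that node together with all of its not-yet-executed ancestors in parent-first (ancestry) order. Consequently the sequence of nodes an honest replica executes is an initial segment of a single path from the genesis, and if a node carries several commands they are run in a fixed deterministic order. So first I would fix the nodes $n_1$ and $n_2$ that carry $\Cmd_1$ and $\Cmd_2$ respectively.

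The crux is to upgrade Lemma~\ref{lm:incumbent-safety} from a pairwise statement to a global one: the set $S$ of all nodes committed by any honest replica is pairwise non-conflicting. Since two non-conflicting nodes satisfy $n \pathfrom n'$ or $n' \pathfrom n$, the set $S$ is totally ordered by ancestry, i.e. it lies on one path from the genesis. This single global chain is what forces all honest replicas into a consistent execution order, and it is the only place the safety lemma is genuinely needed.

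Given the chain $S$, the rest is reading off orders. The honest replica that runs $\Cmd_1$ before $\Cmd_2$ commits both $n_1$ and $n_2$, so both lie on $S$, and its execution order forces $n_1 \pathfrom n_2$ (with $n_1 = n_2$ possible only if the two commands share a node). Now any honest replica that executes $\Cmd_2$ commits $n_2$, hence executes every ancestor of $n_2$ lying in $S$ --- in particular $n_1$ --- strictly before $n_2$, and therefore runs $\Cmd_1$ before $\Cmd_2$. In the degenerate case $n_1 = n_2$, determinism of the per-node ordering already yields the same relative order at every replica.

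The main obstacle is not the chain property itself --- that drops out of Lemma~\ref{lm:incumbent-safety} --- but the faithful operational bookkeeping linking ``executes a command'' to ``commits and executes its node'': namely, that executing $\Cmd_2$ implies $n_2$ is committed and on $S$, that committing $n_2$ forces executing all of its ancestors first, and that the genesis-rooted branch through any committed node is unique so that ancestry is a global, replica-independent relation. I would also take care with the $n_1 = n_2$ subcase and with the possibility that a node carries a batch, where the fixed intra-node order must be invoked to conclude.
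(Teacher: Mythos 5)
Your proposal is correct and follows essentially the same route as the paper: invoke Lemma~\ref{lm:incumbent-safety} to conclude the two command-carrying nodes are non-conflicting, hence one is an ancestor of the other, and then use the recursive ancestor-first logic of \textsc{onCommit} to force the execution order at every honest replica. The only nuance the paper makes explicit that you gloss over is that the safety lemma is applied to the directly-committed descendants $\lvl'$ and $\Lvl'$ (the arguments of \textsc{onCommit}, which carry the Three-Chain) and non-conflict is then transferred down to the ancestor nodes carrying $\Cmd_1$ and $\Cmd_2$; this is exactly the ``operational bookkeeping'' you flag as the remaining work.
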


\begin{proof}
	Denote by $\lvl$ the node that carries $\Cmd_1$, $\Lvl$ carries $\Cmd_2$.
    From Lemma~\ref{lm:incumbent-basic}, it is clear the committed nodes are at distinct heights.
    Without loss of generality, assume $\fHeight{\lvl} < \fHeight{\Lvl}$. The commit of $\lvl$ are $\Lvl$ are
    triggered by some $\Call{onCommit}{\lvl'}$ and $\Call{onCommit}{\Lvl'}$ in $\Call{update}{}$,
    where $\lvl \pathfrom \lvl'$ and $\Lvl \pathfrom \Lvl'$. According to Lemma~\ref{lm:incumbent-safety},
    $\lvl'$ must not conflict with $\Lvl'$, so $\lvl$ does not conflict with $\Lvl$.
    Then $\lvl \pathfrom \Lvl$, and when any honest replica executes $\Lvl$, it must
    first executes $\lvl$ by the recursive logic in \Call{onCommit}{}.
\end{proof}

\subsection{Remarks}

In order to shed insight on the tradeoffs taken in the \HotStuff design, we explain why
certain constraints are necessary for safety.

\paragraph{Why monotonic vheight?}
Suppose we change the voting rule such that a replica does not need to vote
monotonically, as long as it does not vote more than once for each height. The weakened constraint
will break safety. For example, a replica can first vote for $\Lvlx$ and then $\lvly$.
Before learning about $\Lvlxx, \Lvlxxx$, it first delivers $\lvlyy, \lvlyyy$,
assuming the lock is on $\lvly$, and vote for $\lvlyyy$.
When it eventually delivers $\Lvlxxx$, it will flip to the branch led by $\Lvlx$ because it is
eligible for locking, and $\Lvlx$ is higher than $\lvly$. Finally, the replica will also vote for $\Lvlxxx$, causing
the commit of both $\lvly$ and $\Lvlx$.

\paragraph{Why direct parent?}
The direct parent constraint is used to ensure the equality $\fHeight{\Lvlx} >
\fHeight{\lvlyyy}$ used in the proof, with the help of Lemma~\ref{lm:incumbent-basic}. Suppose we do not enforce the rule for commit, so the
commit constraint is weakened to $\lvly \pathfrom \lvlyy \pathfrom \lvlyyy$ instead of $\lvly \gets \lvlyy \gets \lvlyyy$ (same for $\Lvl$).
Consider the case
where $\fHeight{\lvlyy} < \fHeight{\Lvlx} < \fHeight{\Lvlxx} < \fHeight{\lvlyyy} < \fHeight{\Lvlxxx}$. Chances are, a replica can first vote for $\lvlyyy$, and then discover $\Lvlxxx$ to switch to the branch by $\Lvlx$, but it is too late since $\lvly$ could be committed.

\end{appendices}
\end{document}